\newcommand\orbit{\textsf{Orbit}}
\newcommand{\cO}{\mathcal{O}}
\newcommand{\changes}{\textsf{Changes}}
\newcommand{\electionpredi}
{\textsc{Election-Prediction}}
\newcommand{\electionpred}
{\textsc{Reachability}}
\newcommand{\id}{\textsf{id}}
\newcommand{\In}{\textsf{In}}
\def\shortminus{%
  \setbox0=\hbox{-}%
  \vcenter{%
    \hrule width\wd0 height \the\fontdimen8\textfont3%
  }%
}
\begin{document}
\title{Local Certification of  Majority  Dynamics}
%
%\titlerunning{Abbreviated paper title}
% If the paper title is too long for the running head, you can set
% an abbreviated paper title here
%
\author{Diego Maldonado\inst{1}\and
Pedro Montealegre\inst{2}\and
Martín Ríos-Wilson\inst{2} \and\\
Guillaume Theyssier\inst{3}}
\authorrunning{D. Maldonado et al.}
% First names are abbreviated in the running head.
% If there are more than two authors, 'et al.' is used.
%
\institute{
Facultad de Ingeniería, Universidad Católica de la Santísima Concepción, Concepción, Chile\\
\email{dmaldonado@ucsc.cl}
\and
Facultad de Ingeniería y Ciencias, Universidad Adolfo Ibáñez, Santiago, Chile \\
\email{\{p.montealegre,martin.rios\}@uai.cl}
\and
Aix-Marseille Universit\'e, CNRS, I2M (UMR 7373), Marseille, France\\
\email{guillaume.theyssier@cnrs.fr}
}
\maketitle              % typeset the header of the contribution
\begin{abstract}
In majority voting dynamics, a group of $n$ agents in a social network are asked for their preferred candidate in a future election between two possible choices. At each time step, a new poll is taken, and each agent adjusts their vote according to the majority opinion of their network neighbors. After $T$ time steps, the candidate with the majority of votes is the leading contender in the election. In general, it is very hard to predict who will be the leading candidate after a large number of time-steps.  

We study, from the perspective of local certification, the problem of predicting the leading candidate after a certain number of time-steps, which we call $\electionpredi$. We show that in graphs with sub-exponential growth $\electionpredi$ admits a proof labeling scheme of size $\mathcal{O}(\log n)$. We also find non-trivial upper bounds for graphs with a bounded degree, in which the size of the certificates are sub-linear in $n$. 

Furthermore, we explore lower bounds for the unrestricted case, showing that locally checkable proofs for $\electionpredi$ on arbitrary \(n\)-node graphs have certificates on  $\Omega(n)$ bits. Finally, we show that our upper bounds are tight even for graphs of constant growth.

\keywords{Local Certification  \and Majority Dynamics \and Proof Labeling Schemes.}
\end{abstract}

\section{Introduction}

 Understanding social influence, including conformity, opinion formation, peer pressure, leadership, and other related phenomena, has long been a focus of research in sociology \cite{heider1946attitudes,kelman1958compliance}. With the advent of online social network platforms, researchers have increasingly turned to graph theory and network analysis to model social interactions \cite{asuncion2010turning,castellano2009statistical,mislove2007measurement}. In particular, opinion formation and evolution have been extensively studied in recent years \cite{bredereck2017manipulating,javarone2014network,javarone2014social,moussaid2013social,nguyen2020dynamics,yildiz2013binary}.

One of the simplest and most widely studied models for opinion formation is the majority rule \cite{castellano2009statistical}. In this model, the opinion of an individual evolves based on the  opinion of the majority of its neighbors. Specifically, consider an election with two candidates, labeled as 
\(0\) and \(1\), and let \(G\) be an undirected, connected, and finite graph representing the social network. Each node in the graph represents an individual with a preference for the candidate they will vote for. We call this preference \emph{an opinion.} A particular assignment of opinions to each node is called a \emph{configuration}. At the beginning,  we consider that an initial configuration is given, representing the personal beliefs of each individual about their vote intentions. The configuration evolves in synchronous time-steps, where each individual updates its opinion according to the opinion of the majority of its neighbors. If the majority of its neighbors plan to vote for candidate $1$, the node changes its opinion to $1$. Conversely, if the majority of its neighbors prefer $0$, the node switches to $0$. In the event of a tie, where exactly half of the neighbors favor $1$ and the other half favor $0$, the individual retains their current opinion.

All graphs have configurations in which (locally) every node has the same opinion as the majority of its neighbors. These configurations are called \emph{fixed points} since each node retains its opinion in subsequent time-steps. Interestingly,  every graph admit non-trivial fixed points, where the local majority opinion of some nodes is different from the global majority. In general, the initial global majority opinion (among all nodes) is not preserved when the opinions evolve in the majority dynamics. In fact, the majority opinion can shift between the two candidates in non-trivial ways, which depend on both the properties of the graph and the initial distribution of opinions within it.

Actually, some initial configurations never converge to a fixed point. For example, in a network consisting of only two vertices connected by an edge, where one vertex initially has opinion 0 and the other has opinion 1, the two nodes exchange their opinions at each time-step, never reaching a fixed point. Configurations with this behavior are called limit cycle of period 2. Formally, a limit cycle of period 2 is a pair of configurations that mutually evolve into one another under the majority dynamics. In \cite{goles1985decreasing}, it was shown that for any initial configuration on any graph, the majority dynamics either converge to a fixed point or a limit cycle of period 2. The results of \cite{goles1985decreasing} also imply that the described limit behavior (called \emph{attractor}) is reached after a number of time-steps that is bounded by the number of edges in the input graph.

Therefore, even assuming that the opinion a society evolves according to the majority rule, deciding \emph{who wins the election} is a non-trivial task. During the last 25  years, there has been some effort in characterizing the computational complexity of this problem \cite{concha2022complexity,goles2016pspace,moore1997majority}. In this article we tackle this problem from the perspective of distributed algorithms and local decision. \\

\noindent{\bf Local decision.} Let \(G =(V,E)\) be a simple connected \(n\)-node graph. A \emph{distributed language} \(\mathcal{L}\) is a (Turing-decidable) collection of tuples \((G,\id, \In)\), called \emph{network configurations}, where \(\In: V\rightarrow \{0,1\}^*\) is called an \emph{input function} and \(\id: V \rightarrow [n^c]\) is an injective function that assigns to each vertex a unique identifier in \(\{1, \dots, n^c\}\) with \(c>1\). In this article, all our distributed languages are independent of the \(\id\) assignments. In other words, if \((G,\id, \In) \in \mathcal{L}\) for some \(\id\), then \((G,\id', \In) \in \mathcal{L}\) for every other \(\id'\). %Given \(k>9\), we  abuse notation and sometimes we write the input function as a pair \((\In, k)\) to represent the function that assigns to each node \(v\) the pair \((\In(v), k)\). 

Given \(t>0\), a \emph{local decision algorithm} for a distributed language \(\mathcal{L}\) is an algorithm on instance \((G, \id, \In)\), each node \(v\) of \(V(G)\) receives the subgraph induced by all nodes at distance at most \(t\) from \(v\) (including  their identifiers and inputs).  The integer \(t>0\) depends only on the algorithm (not of the size of the input). Each node  performs unbounded computation on the information received, and decides whether to accept or reject, with the following requirements: 
\begin{itemize}
\item When \((G,\id, \In) \in \mathcal{L}\) then every node accepts. 
\item When \((G,\id, \In) \notin \mathcal{L}\) there is at least one vertex that rejects.
\end{itemize}

\noindent{\bf Distributed languages for majority dynamics.} Given an graph \(G\) and an initial configuration \(x\). The \emph{orbit} of \(x\), denoted \(\orbit(x)\), is the sequence of configurations \(\{x^t\}_{t>0}\) such that \(x^0 = x\) and for every \(t>0\), \(x^{t}\) is the configuration obtained from \(x^{t-1}\) after updating the opinion of every node under the majority dynamics. We denote by \(\electionpredi\) the set of triplets \((G,x,T)\), where the majority of the nodes vote \(1\) on time-step \(T\) starting from configuration \(x\). Formally,

 \[\electionpredi = \left\{ ( G,(x,T) ) : \begin{array}{l} x \textrm{ is a configuration of } V(G),\\T>0,\\ \textrm{ and } \sum_{v\in V(G)} x_v^T > \frac{|V(G)|}{2}. \end{array}\right\}\]

It is easy to see that there are no local decision algorithms for \electionpredi. That is, there are no algorithms in which every node of a network exchange information solely with nodes in its vicinity, and outputs which candidate wins the election. Intuitively, a local algorithm solving \electionpredi\ requires the nodes to count the states of other nodes in remote locations of the input graph. In fact, this condition holds even when there is no dynamic, i.e. \(T=0\). In that sense, the counting difficulty of problem \electionpredi\ hides the complexity of predicting the majority dynamics.  For that reason, we also study the following problem:

\[\electionpred = \left\{  (G,(x,y,T))  : \begin{array}{l} x \textrm{ is a configuration of } V(G),\\T>0,\textrm{ and } x^T = y. \end{array}\right\}.\]

Problem \electionpred\ is also hard from the point of view of local decision algorithms. Indeed, the opinion of a node \(v\) after \(t\) time-steps depends on the initial opinion of all the nodes at distance at most \(t\) from \(v\). There are graphs for which the majority dynamics stabilizes in a number of time-steps proportional to the number of edges of the graph \cite{goles1985decreasing}. Therefore, no local algorithm will be able to even decide the opinion of a single vertex in the long term. \\

\noindent{\bf Local certification.} A locally checkable proof for a distributed language \(\mathcal{L}\) is a prover-verifier pair where the prover is a non-trustable oracle assigning certificates to the nodes, and the verifier is a distributed algorithm enabling the nodes to check the correctness of the certificates by a certain number of communication rounds with their neighbors. Note that the certificates may not depend on the instance \(G\) only, but also on the identifiers \(\id\) assigned to the nodes. In proof-labeling schemes, the information exchanged between the nodes during the verification phase is limited to the certificates. Instead, in locally checkable proofs, the nodes may exchange extra-information regarding their individual state (e.g., their inputs \(\In\) or their identifiers, if not included in the certificates, which might be the case for certificates of sub-logarithmic size). The prover-verifier pair must satisfy the following two properties.
 \medskip

\noindent{\bf Completeness}: Given \((G,\In)\in \mathcal{L}\), the non-trustable prover can assign certificates to the nodes such that the verifier accepts at all nodes;

\medskip

\noindent{\bf Soundness:} Given \((G,\In)\notin \mathcal{L}\), for every certificate assignment to the nodes by the non-trustable prover, the verifier rejects in at least one node.

 \medskip

The main complexity measure for both locally checkable proofs, and proof-labeling schemes is the size of the certificates assigned to the nodes by the prover. Another complexity measure is the number of communication rounds executed during the verification step. In this article, all our upper-bounds are valid for Proof Labeling Schemes with one-round certification, while all our lower-bounds apply to locally ceckable proofs with an arbitrary number of rounds of verification.

\subsection{Our results}

We show that in several families of graphs there are efficient certification protocols for \(\electionpred\). More precisely, we show that there is a proof labeling scheme for \(\electionpred\)  with certificates on \(\cO(\log n)\) bits in \(n\)-node networks of sub-exponential growth. 

A graph has sub-exponential growth if, for each node \(v\), the cardinality of the set of vertices at distance at most \(r\) from \(v\) growths as a sub-exponential function in \(r\), for every~\(r>0\). Graphs of sub-exponential growth have bounded degree, and include several structured families of graphs such as the \(d\)-dimensional grid, for every~\(d>0\). Nevertheless, not every class of graphs of bounded degree is of sub-exponential growth. For instance, a complete binary tree has exponential growth.

For graphs of bounded degree, we show that \(\electionpred\) admits proof labeling schemes with certificates of sub-linear size. More precisely, we show that there is a proof labeling scheme for \(\electionpred\) with certificates on \(\cO(\log^2 n)\) bits in \(n\)-node networks of maximum degree \(3\). Moreover, for each \(\Delta>3\) there exists a \(\epsilon > 0\) such that is a proof labeling scheme for \(\electionpred\) with certificates on \(\cO(n^{1-\epsilon})\) bits in \(n\)-node networks of maximum degree \(\Delta>3\). Then, we show that all our upper-bounds are also valid for \(\electionpredi\). 

Then, we focus on lower-bounds. First, we show that in unrestricted families of graphs every, locally-checkable proof for the problem \(\electionpred\) as well as \(\electionpredi\) requires certificates of size \(\Omega(n)\). We also show that even restricted to graphs of degree \(2\) and constant growth, every locally-checkable proof for \(\electionpred\)  requires certificates of size \(\Omega(n)\). \\            

\noindent{\bf Our techniques.} Our upper bounds for the certification of \(\electionpred\) are based on an analysis of the maximum number of time-steps on which an individual may change its opinion during the majority dynamics.  This quantity is in general unbounded. For instance, in an attractor which is a cycle of period two an oscillating node switches its state an infinite number of time-steps. However,  when we look to two consecutive iterations of the majority dynamic, we obtain that the number of changes of a given node depends on the topology of the network. We show that the in the dynamic induced by two consecutive iterations of the majority dynamic (or, alternatively, looking one every two time-steps of the majority dynamics), the number of times that a node switches is state is constant on graphs of sub-exponential growth, logarithmic on graphs of maximum degree \(3\) and sublinear on graphs of bounded degree. 
The bound for graphs of sub-exponential growth was observed  in \cite{ginosar2000majority}, while the other two bounds can be obtained by a careful analysis of the techniques used in \cite{ginosar2000majority} (see Section \ref{sec:majoritybounded} for further details). Roughly speaking, the idea consists defining a function that assigns to each configuration a real value denoted the \emph{energy} of the configuration. This energy function is strictly decreasing in the orbit of a configuration before reaching an attractor. In fact, through the definition of such function it can be shown that the majority dynamics reaches only fixed points or limit cycles of period two, in at most a polynomial number of time-steps \cite{goles1985decreasing}. In this article, we analyze a different energy function proposed in \cite{ginosar2000majority}, from which obtain the upper-bounds for the number of times that a node can switch states in two-step majority dynamic. 

Our efficient proof labeling schemes are then defined by simply giving each vertex the list of time-steps on which it switches it state. From that information the nodes can reconstruct their orbit. We show that the nodes can use the certificates of the neighbors to verify that the recovered orbit corresponds to its real orbit under the majority dynamic. 
Our upper bounds for the certification of  \(\electionpredi\) follow from the protocol used to certify \(\electionpred\), and the use of classical techniques of local certification to count the total number of nodes in the graph, as well as the number of vertices that voted for each candidate.

Our lower-bounds are obtained using two different techniques. First, we show that in unrestricted families of graphs, every locally-checkable proof for the problem \(\electionpred\) or \(\electionpredi\) requires certificates of size \(\Omega(n)\) by a reduction to the disjointedness problem in non-deterministic communication complexity. Then, we prove that restricted to graphs of degree \(2\) and constant growth, every locally-checkable proof for \(\electionpred\)  requires certificates of size \(\Omega(n)\) by using a locally-checkable proof for \(\electionpred\) to design a locally-checkable proof that accepts only if a given input graph is a cycle. In \cite{goos2016locally} it is shown that any locally checkable proof for the problem of distinguishing between a path of a cycle requires certificates of size \(\Omega(\log n)\), implying that certifying \electionpred\ on graphs of degree at most \(2\) and constant growth (a cycle or a path) also requires  \(\Omega(\log n)\) certificates.

\subsection{Related Work}

{\bf Majority dynamics for modeling  social influence} Numerous studies have been conducted on the majority dynamics. In \cite{PhysRevE.68.046106}, the authors studied how noise affects the formation of stable patterns in the majority dynamics. They found that the addition of noise can induce pattern formation in graphs that would otherwise not exhibit them. In \cite{nguyen2020dynamics} the authors explore opinion dynamics on complex social networks, finding that densely-connected networks tend to converge to a single consensus, while sparsely-connected networks can exhibit coexistence of different opinions and multiple steady states. Node degree influences the final state under different opinion evolution rules. Variations of the majority dynamics have been proposed and studied, such as the noisy majority dynamics \cite{vieira2016phase}, where agents have some probability of changing their opinion even when they are in the local majority, and the bounded confidence model \cite{deffuant2000mixing,hegselmann2002opinion}, where agents only interact with others that have similar opinions.\\

\noindent {\bf Complexity of Majority Dynamics.} Our results are in the line of a series of articles that aim to understand the computational complexity of the majority rule by studying different variants of the problem.  In that context, two perspectives have been taken in order to show the P-Completeness. In \cite{goles2015complexity} it is shown that the prediction problem for the majority rule is P-Complete, even  when the topology is restricted to planar graphs where every node has an odd number of neighbors. The result is based in a crossing gadget that use a sort of \emph{traffic lights}, that restrict the flow of information  depending on the parity of the time-step. Then in \cite{goles2014computational} it is shown that the prediction problem for the majority rule is P-Complete when the topology is restricted to regular graphs of degree $3$ (i.e. each node has exactly three neighbors).  In \cite{goles2018complexity} study the majority rule in two dimensional grids where the edges have a \emph{sign}. The \emph{signed majority} consists in a modification of the majority rule, where the most represented state in a neighborhood is computed multiplying the state of each neighbor by the corresponding sign in the edge. The authors show that when the configuration of signs is the same on every site (i.e. we have an homogeneous cellular automata) then the dynamics and complexity of the signed majority is equivalent to the standard majority.  Interestingly, when the configuration of signs may differ from site to site, the prediction problem is P-Complete. 
 A last variant considers the prediction problem under a sequential updating scheme. More precisely, the \emph{asynchronous prediction} problem asks for the existence of a permutation of the cells that produces a change in the state of a given cell, in a given time-step. In fact, in \cite{moore1997majority} Moore suggested in this case it holds a similar dichotomy than in the synchronous case: namely, the complexity in the two-dimensional case is lower than in three or more dimensions. This conjecture was proven in \cite{goles2020complexity} where it was shown that the asynchronous prediction in two dimension is in NC, while it is NP-Complete in three or more dimensions. \\

\noindent {\bf Local certification.} Since the introduction of PLSs~\cite{KormanKP10}, different variants were introduced. As we mentioned, a stronger form of PLS are locally checkable proofs~\cite{goos2016locally}, where each node can send not only its certificates, but also its state and look up to a given radious. Other stronger forms of local certifications are $t$-PLS~\cite{FeuilloleyFHPP21}, where nodes perform communication at distance~$t\geq 1$ before deciding. Authors have studied many other variants of PLSs, such as randomized PLSs \cite{fraigniaud2019randomized}, quantum PLSs~\cite{FraigniaudGNP21}, interactive protocols~\cite{CrescenziFP19,kol2018interactive,NaorPY20}, zero-knowledge distributed certification~\cite{BickKO22}, 
PLSs use global certificates in addition to the local ones~\cite{FeuilloleyH18}, etc. On the other hand, some trade-offs between the size of the certificates and the number of rounds of the verification protocol have been exhibited ~\cite{FeuilloleyFHPP21}. Also, several hierarchies of certification mechanisms have been introduced, including games between a prover and a disprover~\cite{BalliuDFO18,FeuilloleyFH21}.

PLSs have been shown to be effective for recognizing many graph classes. For example, there are compact PLSs  (i.e. with logarithmic size certificates) for the recognition of acyclic graphs \cite{KormanKP10}, planar graphs~\cite{feuilloley2020compact}, graphs with bounded genus~\cite{EsperetL22}, \(H\)-minor-free graphs (as long as \(H\) has at most four vertices)~\cite{BousquetFP21}, etc. In a recent breakthrough, Bousquet et al.~\cite{bousquet2021local}  proved a ``meta-theorem'',  stating that, there exists a PLS for deciding any monadic second-order logic property with $O(\log n)$-bit certificates on graphs of bounded \emph{tree-depth}. This result has been extended by Fraigniaud et al~\cite{FraigniaudMRT22} to the larger class of graphs with bounded \emph{tree-width}, using certificates on $O(\log^2 n)$ bits. 

Up to our knowledge, this is the first work that combines the study of majority dynamics and local certification. 

\section{Preliminaries}

Let \(G = (V,E)\) be a graph. We denote by \(N_G(v)\) the set of neighbors of \(v\),  formally \(N_G(v) = \{u \in V: \{u,v\} \in E\}\). The \emph{degree} of \(v\), denoted \(d_G(v)\) is the cardinality of \(N_G(v)\). The \emph{maximum degree} of \(G\), denoted \(\Delta_G\), is the maximum value of \(d_G(v)\) taken over all \(v\in V\).
We say that two nodes \(u,v\in V\) are \emph{connected} if there exists a path in \(G\) joining them. In the following, we only consider connected graphs.  The \emph{distance} between \(u,v\), denoted \(d_G(u,v)\) is the minimum length (number of edges) of a path connecting them. The \emph{diameter} of \(G\) is the maximum distance over every pair of vertices in \(G\). For a node \(v\), and \(k\geq 0\), the \emph{ball of radius \(k\) centered in \(v\)}, denoted by \(B(v,k)\) is the set of nodes at distance at most \(k\) from \(v\). Formally,
\[B_G(v,k) = \{u \in V: d_G(v,u) \leq k\}\]
We also denote by \(\partial B_G(v,k) = B_G(v,k) \setminus B_G(v,k-1)\) the border of \(B_G(v,k)\).
In the following, we omit the sub-indices when they are obvious by the context. 

Let  \(G = (V,E)\) be a graph, \(v\in V\) be an arbitrary node and  \(f:\mathbb{N}\rightarrow \mathbb{R}\) a function. We say that \(v\) has a \(f\)-bounded growth if there exist constants \(c_1, c_2>0\) such that, for every \(k>0\),  \(c_1 f(k) \leq |\partial B(v,k)| \leq c_2 f(k)\) . We also say that  \(G\) has \(f\)-bounded growth if every node \(v\) has \(f\)-bounded growth. A family of graphs \( \mathcal{G}\) has \(f\)-
bounded growth if every graph in \(\mathcal{G}\) has \(f\)-bounded growth. A family  of graphs \(\mathcal{G}\) has constant-growth (respectively linear, polynomial, sub-exponential, exponential)-growth if \(\mathcal{G}\) has \(f\)-bounded growth, with \(f\) a constant (resp. linear, polynomial, sub-exponential, exponential) function. Observe that since \(B(v,1) = N(v)\), for every \(f\)-bounded graph \(G\) we have that \(\Delta(G) \leq f(1)\).

\subsection{Majority and finite state dynamics.} Let $G=(V,E)$ be a connected graph. We assign to each node in $G$ an initial opinion $v \mapsto x_v \in \{0,1\}.$ We call $x$ a \emph{configuration} for the network $G.$ We call $x(t)$ the \emph{configuration} of the network in time $t.$ We define the majority dynamics in $G$ by the following local rules for $u \in V$ and $t\geq0$:
\[x^{t+1}_u = \begin{cases}
1 & \text{ if } \sum \limits_{v \in N(u)} x^t_v  > \frac{d(v)}{2}, \\
x^t_u  &\text{ if } \sum \limits_{v \in N(u)} x^t_v =  \frac{d(v)}{2}, \\
0 & \text{ otherwise. }  
\end{cases}\]
where $x^0 = x$ is called an \emph{initial configuration}. 
Notice that in the tie case (i.e. a node observe the same number of neighbors in each state), we consider that the node remains in its current state. Therefore, nodes of even degree may depend on their own state while nodes of odd degree do not. Therefore, we can also define the local rule of the majority dynamics as follows:
\[x^{t+1}_u = \textrm{sgn}\left( \sum_{v \in V} a_{uv}x_v^{t} - \dfrac{d(u)}{2}\right)\]
where 
\[a_{uv} =  \begin{cases}
    1 &\text{ if } uv \in E, \\
    1 &\text{ if } u = v \text{ and } d(u) \text{ is even,}\\
    0 &\text{ otherwise. }
\end{cases}\]
and \(\textrm{sgn}(z)\) is the function that equals \(1\) when \(z>0\) and \(0\) otherwise.

Given a configuration \(x\) of a graph \(G\) and a vertex \(v\in V(G)\), we define the \emph{orbit} of \(x\) as the sequence of states that  \(\orbit(x) = (x^0=x, x^1,x^2,\hdots)\) that the majority dynamics visit when the initial configuration is \(x\). We also define the \emph{orbit of vertex} \(v\) as the sequence \(\orbit(x) = (x^0_v=x_v, x^1_v,x^2_v,\hdots,)\).

Observe that the orbit of every configuration is finite and periodic. In other words, there exist non-negative integers \(T=T(G,x)\) \(p=p(G,x)\) such that \(x^{T+p} = x^T\). Indeed, in an \(n\)-node graph it is possible to define exactly \(2^n\) possible configurations. Therefore, in every orbit there is at least one configuration that is visited twice. The minimum \(T\) and \(p\) satisfying previous condition are denoted, respectively, the \emph{transient length} and the \emph{period} of configuration \(x\). The \emph{transient length} \(\textsf{Transient}(G)\) and the \emph{period} \(\textsf{Period}(G)\) of graph a \(G\) are defined, respectively, as the maximum transient length and period of over all configurations of \(G\). Formally,
\[\textsf{Transient}(G) = \max \{T(G,x) : x \in \{0,1\}^V\}, \] \[\textsf{Period}(G) = \max \{p(G,x) : x \in \{0,1\}^V\}.\]

Let  \(x\) a configuration satisfying that \(T(x,G) =  0\) is called an \emph{attractor}. An attractor \(x\) satisfying \(p(x,G) = 1\) is denoted a \emph{fixed point}. Otherwise, it is denote a \emph{limit-cycle} of period \(p(x,G)\).

\subsection{Limit behavior of majority dynamics }

In \cite{goles1985decreasing} it is shown that the transient length of the majority dynamics over every graph \(G = (V,E)\) is at most \(|E|\). Moreover, the all the attractors are either fixed-points or limit-cycles of period \(2\). For sake of completeness, we give a full proof of this result.

\begin{proposition}\label{prop:energy}
For every graph \(G\), \(\textsf{Transient}(G) \leq |E|\) and \(\textsf{Period}(G) \leq~2\). 
\end{proposition}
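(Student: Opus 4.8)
The plan is to define a Lyapunov (energy) functional on pairs of consecutive configurations that strictly decreases along any orbit until an attractor of period at most $2$ is reached, and then bound the total possible decrease by $|E|$ to control the transient. This is the classical Goles--Olivos argument, so I will follow that route.

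\medskip

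\textbf{The energy functional.} First I would introduce, for the majority rule written as $x^{t+1}_u = \textrm{sgn}\left(\sum_{v} a_{uv} x_v^t - \tfrac{d(u)}{2}\right)$, an operator energy that depends on \emph{two} consecutive time-steps rather than one. Using $\pm 1$ states (i.e. replacing the state $0$ by $-1$ via $s = 2x-1$) simplifies the sign bookkeeping, so I would work with $s^t_u \in \{-1,+1\}$ and the symmetric matrix $A = (a_{uv})$. The natural candidate is
\[
E(t) = -\sum_{u,v} a_{uv}\, s^t_u\, s^{t-1}_v + \sum_u \tfrac{d(u)}{2}\left(s^t_u + s^{t-1}_u\right),
\]
or a suitable normalization thereof. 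The key point I want is that $E$ is bounded (it ranges over a finite, explicitly bounded set of values because the states are in $\{-1,+1\}$ and $A$ has integer entries), and that the increment $E(t+1)-E(t)$ factors as a sum over nodes of a product of the form $\big(s^{t+1}_u - s^{t-1}_u\big)\cdot\big(\sum_v a_{uv}s^t_v - \text{threshold}\big)$.

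\medskip

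\textbf{Monotonicity step.} The heart of the proof is showing $E(t+1) - E(t) \le 0$, with equality exactly when the dynamics has entered a period-$\le 2$ attractor. Here I would exploit the update rule: the sign of $s^{t+1}_u$ agrees with the sign of the local field $h^t_u := \sum_v a_{uv}s^t_v$ (with the tie/retain convention handled by the $a_{uu}$ term for even degree), so $\big(s^{t+1}_u - s^{t-1}_u\big)h^t_u \ge 0$ term by term, because $s^{t+1}_u$ is chosen to align with $h^t_u$. Summing these inequalities and matching them against the telescoped difference of $E$ gives $\Delta E \le 0$. Moreover $\Delta E = 0$ forces $s^{t+1}_u = s^{t-1}_u$ for every $u$, i.e. the configuration at time $t+1$ equals the one at time $t-1$, which is precisely a limit cycle of period dividing $2$; this simultaneously yields the $\textsf{Period}(G)\le 2$ conclusion.

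\medskip

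\textbf{Bounding the transient.} Finally, to get $\textsf{Transient}(G)\le |E|$, I would show that whenever the orbit has not yet stabilized, the strict decrease of $E$ is by a definite quantized amount (at least some fixed positive rational, e.g. an integer after clearing denominators), and that the total range of $E$ over all configuration pairs is at most of order $|E|$ in those same units. Counting, each strictly decreasing step costs at least one unit and the total budget is $|E|$ units, so there can be at most $|E|$ transient steps before $\Delta E = 0$ is forced. \textbf{The main obstacle} I anticipate is handling the even-degree (self-loop / tie-retention) terms cleanly: the $a_{uu}$ contribution must be incorporated so that the per-node inequality $(s^{t+1}_u - s^{t-1}_u)h^t_u \ge 0$ stays valid in the tie case, and so that the bound on the total variation of $E$ comes out to exactly $|E|$ rather than a larger constant. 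Getting the normalization and the quantization granularity consistent with that $|E|$ bound is the delicate accounting I would need to verify carefully.
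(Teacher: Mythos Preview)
Your plan is correct and is essentially the same Goles--Olivos energy argument the paper uses; the only difference is cosmetic. The paper works directly in $\{0,1\}$ and takes
\[
E^t(x)=\sum_{u,v}a_{uv}\,\bigl|x_u^{t+1}-x_v^{t}\bigr|,
\]
which is affinely equivalent to your $\{-1,+1\}$ bilinear form since $|x_u-x_v|=\tfrac{1-s_us_v}{2}$. With this formulation the ``delicate accounting'' you flag as the main obstacle disappears: $E^t$ is manifestly a nonnegative integer bounded above by $|E|$, and the per-node analysis (your $(s_u^{t+1}-s_u^{t-1})h_u^t\ge 0$ becomes $|x_u^{t+1}-x_v^t|-|x_u^{t-1}-x_v^t|$ summed over $v\in N(u)$) gives a strict drop of at least $1$ whenever some $x_u^{t+1}\neq x_u^{t-1}$, with the even-degree tie case handled automatically by the $a_{uu}$ self-loop already built into the matrix. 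So the paper's choice of coordinates buys exactly the quantization and range bound you were worried about, but the underlying mechanism is identical to yours.
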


Let us fix a graph \(G\) a configuration \(x \in \{0,1\}^V\). The \emph{energy} of the orbit of \(\textsf{Orbit}(x)\) is a function that assigns to each time step \(t\) the a value \(E^t(x)\) as follows: 
\[E^t(x)=\sum_{u,v\in V}a_{uv}|x_u^{t+1}-x_v^{t}|\]

\begin{lemma}\label{lem:energy}
 \(E^{t+1}(x) < E^{t}\) for every \(t < T(G,x)\).  
\end{lemma}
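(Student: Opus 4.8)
The plan is to compute the one-step difference $E^{t+1}(x)-E^t(x)$ explicitly and show that it decomposes into a sum of per-vertex contributions, each nonpositive, which all vanish precisely when $x^{t+2}=x^t$. First I would exploit the symmetry $a_{uv}=a_{vu}$ (which holds because the adjacency part is symmetric and the self-loop term $a_{uu}$ is trivially symmetric) to rewrite the energy at time $t$ in the equivalent form $E^t(x)=\sum_{u,v}a_{uv}|x_v^{t+1}-x_u^t|$, just by renaming the summation indices. With this form both $E^t$ and $E^{t+1}=\sum_{u,v}a_{uv}|x_u^{t+2}-x_v^{t+1}|$ carry the time-$(t+1)$ state on the index $v$, so their termwise difference is $|x_u^{t+2}-x_v^{t+1}|-|x_u^t-x_v^{t+1}|$ for each pair $(u,v)$.

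Next I would apply the elementary identity $|x-b|=b+x(1-2b)$, valid for $x,b\in\{0,1\}$, which turns that difference into $(x_u^{t+2}-x_u^t)(1-2x_v^{t+1})$. Summing over $u,v$ and grouping by $u$ gives
\[E^{t+1}(x)-E^t(x)=\sum_{u}(x_u^{t+2}-x_u^t)\sum_{v}a_{uv}(1-2x_v^{t+1}).\]
The inner factor equals $\sum_v a_{uv}-2\sum_v a_{uv}x_v^{t+1}$. Writing $\phi_u=\sum_v a_{uv}x_v^{t+1}-d(u)/2$ for the threshold field that governs the update $x_u^{t+2}=\textrm{sgn}(\phi_u)$, and using $\sum_v a_{uv}=d(u)+e_u$ with $e_u=1$ iff $d(u)$ is even, the inner factor simplifies to $e_u-2\phi_u$.

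The crux is then a short sign analysis, which is exactly where the even/odd degree distinction (the self-loop encoding of the tie rule) must be handled with care. When $x_u^{t+2}=1$ we have $\phi_u>0$; since $2\phi_u$ is an integer which is $\ge 1$ in the odd case ($\phi_u\ge 1/2$) and $\ge 2$ in the even case ($\phi_u\ge 1$), the factor $e_u-2\phi_u$ is $\le -1$. Symmetrically, when $x_u^{t+2}=0$ one gets $e_u-2\phi_u\ge 1$. Hence the sign of $e_u-2\phi_u$ is always opposite to that of $2x_u^{t+2}-1$, and checking the four cases of $(x_u^t,x_u^{t+2})\in\{0,1\}^2$ shows each summand $(x_u^{t+2}-x_u^t)(e_u-2\phi_u)$ is nonpositive, and is strictly negative exactly when $x_u^{t+2}\neq x_u^t$. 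Therefore $E^{t+1}(x)-E^t(x)\le 0$, with equality if and only if $x^{t+2}=x^t$.

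Finally I would connect this to the transient length. By definition $T(G,x)$ is the least index from which the orbit repeats, so if $x^{t+2}=x^t$ held for some $t<T(G,x)$, then $t$ itself would satisfy the periodicity condition (with period $2$), forcing $T(G,x)\le t$ and contradicting $t<T(G,x)$. Hence $x^{t+2}\neq x^t$ for every $t<T(G,x)$, the inequality above is strict, and $E^{t+1}(x)<E^t(x)$ follows. I expect the main obstacle to be the careful bookkeeping of the sign analysis, in particular verifying that the self-loop $a_{uu}$ added for even-degree vertices compensates the $e_u$ term exactly, so that the per-vertex contribution never acquires the wrong sign at a tie.
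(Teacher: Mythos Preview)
Your proof is correct and follows essentially the same approach as the paper: both decompose $E^{t+1}(x)-E^t(x)$ into per-vertex contributions and show each is nonpositive, and strictly negative exactly when $x_u^{t+2}\neq x_u^t$. Your version is slightly more algebraic (via the identity $|x-b|=b+x(1-2b)$ and the explicit threshold $\phi_u$), whereas the paper argues the sign directly from the majority rule, but the structure and key idea are the same.
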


\begin{proof}
Observe that, since \(G\) is undirected, \[\sum_{u,v\in V}a_{uv}|x_u^{t}-x_v^{t-1}| = \sum_{u,v\in V}a_{uv}|x_u^{t-1}-x_v^{t}|.\]
 Then,
\begin{align*}
 E^{t+1}(x)-E^t(x) & = \sum_{u,v\in V}a_{uv}|x_u^{t+1}-x_v^{t}| - \sum_{u,v\in V}a_{uv}|x_u^{t}-x_v^{t-1}|\\
&= \sum_{u,v\in V}a_{uv}\left(|x_u^{t+1}-x_v^{t}| - |x_u^{t-1}-x_v^{t}|\right)\\
&= \sum_{u \in V}\sum_{v\in N(u)}\left(|x_u^{t+1}-x_v^{t}| - |x_u^{t-1}-x_v^{t}|\right)
\end{align*}
 For each \(u \in V\) let us call \(\Sigma_u = \sum \limits_{v \in N(u)} \left(|x_u^{t+1}-x_v^{t}| - |x_u^{t-1}-x_v^{t}|\right)\).  If \(t>0\) is such that \(x_u^{t+1} = x_u^{t-1}\), then we have \(\Sigma_u = 0\). Otherwise, by the majority rule, at time $t,$ we have that the majority  of the neighbors of \(u\) are in state  \(x_u^{t+1}\) which implies that \(\Sigma_u < 0\). 
\end{proof}

The proof Proposition~\ref{prop:energy} is a direct consequence of the previous lemma.

\begin{proof}[Proof of Proposition \ref{prop:energy}]
From Lemma~\ref{lem:energy} we have that the energy  is strictly decreasing in the transient of a orbit. Observe that for every configuration \(x\) and every \(t>0\) we have that \(0 \leq E^t(x) \leq |E|\). Moreover, if \(t \leq T(G,x)\) then \(E^{t+1}(x) - E^{t}(x) \leq -1\). Therefore, in \(t \leq |E|\) time-steps the orbit satisfies that \(E^{t+1}(x) = E^t(x)\). Such a time-step must satisfy \(x^{t+1}_u = x^{t-1}_u\) for every \(u\in V\). In other words, the configuration reached is a fixed point or a limit cycle of period~\(2\). 
\end{proof}

\section{Majority on graphs of bounded degree}\label{sec:majoritybounded}

In this section, we focus in the case of networks of bounded degree. Our analysis is based on the results of \cite{ginosar2000majority}, where the authors aim to study the asymptotic behavior of the majority dynamics on infinite graphs. Our goal is to bound the number of changes in the two-step majority dynamics. We consider a variant of the energy operator. More precisely, for each \(t>0\) and \(u\in V\), we aim to bound the number of time-steps on which the quantity \(c^t_u(x) = |x^{t+1}_{u} - x^{t-1}_{u}|\) is non-zero.

%In the following lemma, we give a bound on \( \sum_{t>0} c_r^t(x)\), for every configuration \(x\). 

\begin{theorem}\label{theo:boundedchange}
Let \(G\) be a graph and \(x\) be an arbitrary configuration. Then, for every \(r\in V\) and every \(T>0\), it holds:
\begin{enumerate}

 \item If \(G\) is a  graph of sub-exponential growth, then 
\(\displaystyle{\sum_{t=1}^T c_r^t(x)  = \mathcal{O}(1)} \).
\item If \(G\) is a graph of maximum degree \(3\), then \(\displaystyle{\sum_{t=1}^T c_r^t(x)  ~=~ \cO(\log n)}\).
\item If \(G\) is a graph of maximum degree \(\Delta\), then
\(\displaystyle{\sum_{t=1}^T c_r^t(x)  ~=~ \cO(n^{1-\varepsilon})}\), where \[\varepsilon = \left(\dfrac{\log(\Delta+2)}{\log(\Delta)}-1\right) > \dfrac{1}{\Delta \log (\Delta)}.\]
\end{enumerate}

\end{theorem}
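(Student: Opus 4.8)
The plan is to localize the energy argument of Lemma~\ref{lem:energy} to the balls $B(r,k)$ centered at the fixed node $r$, and to pay for the energy that leaks across $\partial B(r,k)$ using only the two-step changes that occur \emph{on} the boundary. Write $M(v,T)=\sum_{t=1}^T c_v^t(x)$ and $S_k=\sum_{u\in B(r,k)}M(u,T)$, so that the quantity to bound is $S_0=M(r,T)$. For a fixed radius $k$, I would consider the restricted energy
\[
E_B^t \;=\; \sum_{u\in B(r,k)}\ \sum_{v\in N(u)\cap B(r,k)} a_{uv}\,\bigl|x_u^{t+1}-x_v^{t}\bigr|,
\]
which is symmetric inside the ball and satisfies $0\le E_B^t\le |E(B(r,k))|=\cO(\Delta\,|B(r,k)|)$. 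The aim is to turn the one-step decay of $E_B$ into a recursion relating $S_{k-1}$ and $S_k$.

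Repeating the computation of Lemma~\ref{lem:energy} inside the ball gives $E_B^{t}-E_B^{t+1}=-\sum_{u\in B(r,k)}\Sigma_u^{(B)}$, where $\Sigma_u^{(B)}$ is the sum defining $\Sigma_u$ restricted to the neighbours of $u$ lying in $B(r,k)$. Two facts drive the argument. First, if $c_u^t(x)=0$ then $x_u^{t+1}=x_u^{t-1}$ and every summand of $\Sigma_u^{(B)}$ vanishes; hence only nodes that actually change can contribute, which is exactly what replaces a useless $\cO(\Delta\,|\partial B(r,k)|\cdot T)$ boundary term by one measured in \emph{changes}. Second, an interior node $u\in B(r,k-1)$ has all its neighbours inside the ball, so $\Sigma_u^{(B)}=\Sigma_u\le -1$ as in Lemma~\ref{lem:energy}; whereas for a \emph{changing} boundary node the majority rule guarantees at most $\lfloor \Delta/2\rfloor$ neighbours disagree with $x_u^{t+1}$, and taking the minimum between this count and the number of inside neighbours yields $\Sigma_u^{(B)}\le \lfloor\Delta/2\rfloor$. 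Summing over $t=1,\dots,T$ and using $0\le E_B^{T+1}$ and $E_B^1\le|E(B(r,k))|$, I obtain
\[
S_{k-1}\;\le\; |E(B(r,k))| \;+\; \Bigl\lfloor\tfrac{\Delta}{2}\Bigr\rfloor\,\bigl(S_k-S_{k-1}\bigr),
\]
that is, $S_{k-1}\le b_k+\rho\,S_k$ with $\rho=\frac{\lfloor\Delta/2\rfloor}{1+\lfloor\Delta/2\rfloor}\le\frac{\Delta}{\Delta+2}$ and $b_k=\cO(\Delta\,|B(r,k)|)$.

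It remains to unroll the recursion from a cutoff radius $K$, namely $M(r,T)=S_0\le \sum_{k=1}^K \rho^{k-1}b_k+\rho^K S_K$, and to bound the tail by the global estimate $S_K\le \sum_{u}M(u,T)\le |E|=\cO(\Delta n)$ coming from Lemma~\ref{lem:energy}. Writing $|B(r,k)|=\cO(\gamma^k)$, everything hinges on the ratio $\rho\gamma$. For sub-exponential growth $\gamma=1+o(1)$, so $\rho\gamma<1$, the series $\sum_k\rho^{k-1}b_k$ converges to a constant independent of $n$ and $T$, and $K=\cO(\log n)$ makes $\rho^K S_K=\cO(1)$; this proves the sub-exponential case. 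For maximum degree $\Delta$ one uses $|B(r,k)|=\cO(\Delta^k)$, i.e.\ $\gamma=\Delta$ and $\rho\gamma=\tfrac{\Delta^2}{\Delta+2}>1$; choosing $K=\log_\Delta n$ balances the two terms (both equal $\Theta(\rho^K n)$) and produces $M(r,T)=\cO(n^{1-\varepsilon})$ with $\varepsilon=\log_\Delta(1/\rho)=\frac{\log(\Delta+2)}{\log\Delta}-1$, giving the bounded-degree case. Finally, degree $3$ is the borderline case: here $\lfloor\Delta/2\rfloor=1$ sharpens the boundary bound to $\Sigma_u^{(B)}\le 1$ (so $\rho=\tfrac12$) and the tree estimate $|B(r,k)|=\cO(2^k)$ gives $\gamma=2$, whence $\rho\gamma=1$; the geometric sum degenerates to $\cO(K)$ and $K=\log_2 n$ yields the $\cO(\log n)$ bound.

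The main obstacle is the sharp control of the boundary flux: proving $\Sigma_u^{(B)}\le\lfloor\Delta/2\rfloor$ for a changing boundary node — through the majority property together with the minimum over inside/outside neighbours — is precisely what makes the exponent $\varepsilon$ come out as stated, and its specialization to $1$ when $\Delta=3$ is what lands the degree-$3$ analysis exactly on the critical ratio $\rho\gamma=1$. The other subtle point, already built into the setup, is the observation that nodes with $c_u^t(x)=0$ leak no energy, so the cost charged to $\partial B(r,k)$ is the number of changes there rather than a quantity growing with the horizon $T$; this is what allows the recursion to close and the tail $\rho^K S_K$ to be absorbed.
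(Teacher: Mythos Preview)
Your argument is correct and reaches the same conclusions, but it follows a genuinely different route from the paper. The paper (following Ginosar--Holzman) works with a \emph{single} energy operator centered at $r$, namely
\[
E_r^t(x)=\sum_{u,v}a_{uv}\,\alpha^{\min\{d(r,u),d(r,v)\}}\,\bigl|x_u^{t+1}-x_v^{t}\bigr|,
\]
and chooses the decay rate $\alpha=\Delta/(\Delta+2)$ so that the contribution of every node $u\neq r$ to $E_r^{t+1}-E_r^t$ is nonpositive; this yields $E_r^{t+1}-E_r^t\le -c_r^t$ in one stroke, and the three bounds then come from estimating the initial value $E_r^1$. Your approach instead uses the \emph{unweighted} energy on each ball $B(r,k)$, extracts the recursion $S_{k-1}\le b_k+\rho S_k$ with $\rho=\lfloor\Delta/2\rfloor/(1+\lfloor\Delta/2\rfloor)$, and unrolls it. The two are dual: the paper's weighted energy is essentially $\sum_k(\alpha^k-\alpha^{k+1})E_{B(r,k)}^t$, and indeed your $\rho$ coincides with the paper's $\alpha$ (both equal $\tfrac12$ for $\Delta=3$ and $\Delta/(\Delta+2)$ for even $\Delta$), so the final exponents match. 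What the weighted-energy proof buys is brevity --- a single inequality $\sum_t c_r^t\le E_r^1$ --- while your recursive version makes the shell-by-shell accounting explicit and arguably gives more insight into why the boundary flux is controlled by the \emph{number of changes} rather than by $T$. Two small remarks: the bound $\Sigma_u^{(B)}\le\lfloor\Delta/2\rfloor$ follows simply because at most $\lfloor\Delta/2\rfloor$ neighbours of a changing $u$ disagree with $x_u^{t+1}$ (the ``minimum'' phrasing is unnecessary), and your use of Lemma~\ref{lem:energy} for $S_K\le|E|$ actually invokes the quantitative form $E^{t+1}-E^t\le -\sum_u c_u^t$ implicit in its proof, not the statement as written.
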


\begin{proof}
We fist show (3), and then adapt la proof to show (1) and (2).  We define an energy operator relative to \(r\), giving weights to the edges of \(G\) that decrease exponentially with the distance from \(r\). Formally, we denote by \(E_{r}^t\) the \emph{energy operator centered in \(r\)}, defined as follows:

\[E_r^t(x)=\sum_{u,v\in V}\tilde{a}_{u,v}|x_u^{t+1}-x_v^{t}|\]
where \(\tilde{a}_{u,v} = a_{u,v}\cdot\alpha^{\delta(u,v)}\), with \(\alpha \in (0,1)\) a constant that we will fix later, and \(\delta(u,v)=\min\{d(r,u),d(r,v)\}\). Then,

\begin{align*}
    E_{r}^{t+1}(x) - E_r^{t}(x)
    =&\sum_{\{u,v\}\in E}\alpha^{\delta(u,v)}|x_u^{t+1}-x_v^{t}|-\sum_{\{u,v\}\in E}\alpha^{\delta(u,v)}|x_u^{t}-x_v^{t-1}|\\
    =&\sum_{\{u,v\}\in E}\alpha^{\delta(u,v)}|x_u^{t+1}-x_v^{t}|-\sum_{\{u,v\}\in E}\alpha^{\delta(u,v)}|x_u^{t-1}-x_v^{t}|\\
    =&\sum_{\{u,v\}\in E}\alpha^{\delta(u,v)}\left(|x_u^{t+1}-x_v^{t}|-|x_u^{t-1}-x_v^{t}|\right)
\end{align*}

We aim to upper bound \(E_{r}^{t+1}(x) - E_r^{t}(x)\). Observe that for \(u \in \partial B(r,i)\) and \(v\in N(u)\), the value of \(\delta(u,v)\) is either \(i-1\) or \({i}\). Suppose that \(c_u^t \neq 0\). We have that \(\left(|x_u^{t+1}-x_v^{t}|-|x_u^{t-1}-x_v^{t}|\right)\) is maximized when almost half of the neighbors of \(u\) are in a different state than \(u\) in \(t-1\), and exactly those neighbors are connected with edges of weight \(\alpha^{i-1}\). Therefore,

\begin{align*}
    E_{r}^{t+1}(x) - E_r^{t}(x) =&\sum_{i=0}^{\infty}\sum_{u\in \partial B(r,i)}\sum_{v\in N(u)}\alpha^{\delta(u,v)}\left(|x_u^{t+1}-x_v^{t}|-|x_u^{t-1}-x_v^{t}|\right)\\
    \leq&-c_r^t(x) \\&+ \sum_{i=1}^{\infty}\sum_{\substack{ u \in \partial B(r,i)\\ d(u) \textrm{ is even}}} c_u^t(x) \left(\left(\dfrac{d(u)}2 \right) \alpha^{i-1} - \left(\dfrac{d(u)}2+1\right)\alpha^i \right)\\
    &+ \sum_{i=1}^{\infty}\sum_{\substack{ u \in \partial B(r,i)\\ d(u) \textrm{ is odd}}} c_u^t(x) \left(\left(\dfrac{d(u)-1}2 \right) \alpha^{i-1} - \left(\dfrac{d(u)+1}2\right)\alpha^i \right)
\end{align*}

We now choose \(\alpha\). We  impose that for each \(u\in V\setminus \{r\}\) such that \(d(u)\) is even,
\begin{equation}\label{eq:even}\left(\dfrac{d(u)}2 \right) \alpha^{i-1} - \left(\dfrac{d(u)}2+1\right)\alpha^i \leq 0 \Rightarrow \alpha \geq \dfrac{d(u)}{d(u) +2};\end{equation}
and for each \(u\in V\setminus \{r\}\) such that \(d(u)\) is odd,

\begin{equation}\label{eq:odd}\left(\dfrac{d(u)-1}2 \right) \alpha^{i-1} - \left(\dfrac{d(u)+1}2\right)\alpha^i \leq 0 \Rightarrow \alpha \geq \dfrac{d(u)-1}{d(u)+1}.\end{equation}
Picking \(\alpha = \dfrac{\Delta}{\Delta+2}\) we obtain that conditions \ref{eq:even} and \ref{eq:odd} are satisfied for every \(u\in V \setminus \{r\}\). Then, \[E_{r}^{t+1}(x) - E_r^{t}(x) \leq -c_r^t.\] Using that \(E_r^t(x)\geq 0\) for every \(t>0\), we obtain

\[
\sum_{t=1}^T c_r^t(x) \leq \sum_{t=1}^T \left(E_r^{t}(x) - E_r^{t+1}(x)\right) = E_r^1(x) -  E_r^T(x) \leq E_r^1(x)
\]

To obtain our bound for \(\sum_{t=1}^T c_r^t(x) \), we upper bound  \(E_r^1(x)\). Observe that

\[ E_r^1(x) =  \sum_{\{u,v\}\in E}\alpha^{\delta(u,v)}|x_u^{t+1}-x_v^{t}| 
 =  \sum_{i=0}^\infty \sum_{u\in \partial B(r,i)} \sum_{v\in N(u)} \alpha^{\delta(u,v)}|x_u^{1}-x_v^{0}|\]

We have that the previous expression is maximized when, for each \(u\in V\), almost half of the neighbors \(v\) of \(u\) satisfy \(x_v^0 \neq x_u^1\), and the edge connecting such neighbors has the maximum possible weight. In that case, we obtain:

\begin{equation}\label{eq:bound}
 E_r^1(x)  \leq \dfrac{\Delta}2 \left(1+ \sum_{i=1}^\infty |\partial B(r,i)| \alpha^{i-1}\right)
\end{equation} 

 Now let us fix \(q = \dfrac{\log(n)}{\log \Delta}\). We have that:
\begin{align*}
E_r^1(x) & \leq      \dfrac{\Delta}2\left( 1 + \sum_{i=1}^q |\partial B(r,i)|   \alpha^{i-1} + \sum_{i=q+1}^\infty |\partial B(r,i)|  \alpha^{i-1}  \right) \\
& \leq \dfrac{\Delta+2}2\left(  \sum_{i=0}^q (\Delta  \alpha)^{i} + \sum_{i=q+1}^\infty n \alpha^{i}  \right)\\
& = \dfrac{\Delta+2}2\left(  \dfrac{(\Delta \alpha)^q -1}{\Delta\alpha -1} + (\alpha)^{q+1}n \dfrac{1}{1-\alpha}\right)\\
& = \left(\dfrac{(\Delta+2)^2}{2(\Delta+1)(\Delta - 2)}\right) \left((\Delta \alpha)^q -1 \right)+ \dfrac{\Delta(\Delta +2)}{4} n(\alpha)^q  
\end{align*}

Now observe that 

\[(\Delta \alpha)^q = 2^{q (2 \log \Delta - \log (\Delta + 2))} = n^{1 - \left(\frac{\log(\Delta+2)}{\log(\Delta)} - 1\right) } =  n^{1 - \varepsilon}\]
and 
\[n(\alpha)^q = n\cdot 2^{q (\log \Delta - \log (\Delta + 2))} = n^{1 - \left(\frac{\log(\Delta+2)}{\log(\Delta)} - 1\right) }= n^{1-\varepsilon}\]

We conclude that \[\sum_{t=1}^T c_r^t(x) \leq  E_r^1(x) \leq \left(\dfrac{(\Delta+2)^2}{2(\Delta+1)(\Delta - 2)} + \dfrac{\Delta(\Delta + 2)}{4}\right) \cdot n^{1-\varepsilon}= \cO(n^{1-\varepsilon})\]

This finishes the proof of (3).

To prove (1), consider Equation~\ref{eq:bound} and observe that in this case:

\[E_r^1(x) \leq  \dfrac{\Delta}2 \left( 1+ \sum_{i=1}^\infty f(i) \alpha^{i-1}\right) \]

Since \(f(i)\) is sub-exponential, we have that there exists a large enough \(\ell>0\) such that \[f(\ell) \leq  \left(\dfrac{\Delta+1}{\Delta}\right)^{\ell} .\] 

Then, 
\begin{align*}
E_r^1(x) &\leq  \dfrac{\Delta}2 \left( 1+ \sum_{i=1}^\infty f(i) \alpha^{i-1}\right) \\
&=\dfrac{\Delta}2 \left(1+\sum_{i=1}^{\ell} f(i) \alpha^{i-1} + \sum_{i=i^*+1}^{\infty} f(i) \alpha^{i-1} \right)\\
& \leq \dfrac{\Delta}2+ \dfrac{\Delta+2}2 \left(\left(\dfrac{\Delta+1}{\Delta}\right)^{\ell} \cdot \sum_{i=1}^{\ell} \left(\dfrac{\Delta}{\Delta + 2} \right)^{i} + \sum_{i=\ell+1}^{\infty} \left(\dfrac{\Delta+1}{\Delta+2}\right)^{i} \right)\\
&=\dfrac{\Delta}2+ \dfrac{\Delta(\Delta+2)}4\cdot\left(\dfrac{\Delta+1}{\Delta}\right)^{\ell} \cdot \left(1- \left(\dfrac{\Delta}{\Delta + 2} \right)^{\ell} \right) + \dfrac{(\Delta+2)^2}2 \left(\dfrac{\Delta+1}{\Delta+2}\right)^{\ell+1}
\end{align*}
Since \(\Delta \leq f(1)\), we deduce that 

\[\sum_{t=1}^T c_r^t(x) \leq  E_r^1(x) = \cO(1)\]

Finally,  let us prove (2). In a graph of maximum degree 3 we can pick \(\alpha = 1/2\) and satisfy, for every \(u\in V\),  Equations~\ref{eq:even} and \ref{eq:odd}. Moreover, for every \(v\in V\) and every \(i\geq 0\), \(|\partial B(v,i)| \leq 3 \cdot 2^{i-1}\). Then, starting from Equation~\ref{eq:bound} and picking \(q = 2\log(n)\) we have that:

\begin{align*}
E_r^1(x) & \leq 1 + \sum_{i=1}^q |\partial B(r,i)|   \alpha^{i-1} + \sum_{i=q+1}^\infty |\partial B(r,i)|  \alpha^{i-1}   \\
& \leq  1+ 3\sum_{i=1}^q 2^{i-1}  \alpha^{i-1} + \sum_{i=q+1}^\infty n \alpha^{i} \\
& = 1 + 3q  + n(\alpha)^{q} = 6 \log n + 1 +\dfrac1n
\end{align*}
We deduce that
\[\sum_{t=1}^T c_r^t(x) \leq  E_r^1(x) = \cO(\log n)\]

\end{proof}

\section{Certification Upper-bounds}

In this section we give protocols for the certification of \electionpred\ and \electionpredi. 

\subsection{Upper-bound for~\electionpred}

For a graph \(G\) let us define define \(\changes(G) = \max_{x}\left(\max_{v} \sum_{t>0}c^t_v(x)\right)\). For a set of graphs \(\mathcal{G}\) we define \(\changes(\mathcal{G}) = \max_{G \in \mathcal{G}} \changes(G)\). Given an infinite set of graphs \(\mathcal{G}\) and \(n>0\), we denote by \(\mathcal{G}_n\) the subset of \(\mathcal{G}\) of size \(n\).

\begin{lemma}\label{theo:pls}
 For each \(n\)-node graph \(G\) there is a proof labeling scheme for problem \(\electionpred\) with certificates of size \(\cO(\changes(G) \cdot \log n)\).
\end{lemma}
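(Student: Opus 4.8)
The plan is to realize the idea sketched in the introduction: the certificate of a node $v$ encodes its whole orbit in compressed form, namely the two seed bits $x_v^0,x_v^1$ together with the sorted list $C_v=\{t>0 : c_v^t(x)=1\}$ of time-steps at which the two-step dynamics of $v$ flips. By Proposition~\ref{prop:energy} every orbit reaches an attractor after at most $|E|\le\binom{n}{2}$ steps and then has period at most $2$; in a $2$-cycle one has $x_v^{t+2}=x_v^{t}$, hence $c_v^{t}=0$, so all elements of $C_v$ are integers in $[1,|E|+1]$ and $|C_v|=\sum_{t>0}c_v^t(x)\le\changes(G)$. Each change-time therefore costs $\cO(\log n)$ bits, and the entire certificate has size $\cO(\changes(G)\cdot\log n)$, as required. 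From $(x_v^0,x_v^1,C_v)$ one reconstructs the full orbit deterministically via $x_v^{t+1}=x_v^{t-1}\oplus[t\in C_v]$, and for $T$ larger than the transient one reads off $x_v^T$ from the last two reconstructed values according to the parity of $T$.

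\textbf{Verification (one round) and completeness.} Each node $v$ first decodes its own orbit and, from each neighbour $u$'s certificate, the orbit of $u$; this is possible precisely because the decoding is a function of the certificate alone. Then $v$ performs three checks: (i) $x_v^0$ equals its input opinion $x_v$; (ii) for every $t$ with $0\le t\le |E|$ the local majority rule holds, i.e. $x_v^{t+1}=\mathrm{sgn}\bigl(\sum_{u}a_{vu}x_u^{t}-d(v)/2\bigr)$, where each neighbour value $x_u^{t}$ is taken from the reconstruction of $u$'s certificate; and (iii) the reconstructed $x_v^T$ equals the target $y_v$. Since (ii) uses only $v$'s certificate and those of its neighbours, one communication round suffices, and the case $t=0$ of (ii) certifies the seed $x_v^1$, so no value is merely trusted. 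For completeness, given $(G,(x,y,T))\in\electionpred$ the prover writes the true seeds and change-times; all three checks pass and the size bound holds by the paragraph above.

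\textbf{Soundness.} Suppose every node accepts. I claim the reconstructed configuration $\hat x^{t}$ equals the true orbit $x^{t}$ for all $t$, by induction on $t$. For $t=0$, check (i) gives $\hat x^0=x^0=x$. For the inductive step, the crucial point is that the value $v$ plugs in for a neighbour $u$ at time $t$ is decoded from $u$'s certificate and hence coincides with the value $u$ decodes for itself; by the inductive hypothesis it equals $x_u^{t}$. Check (ii) at $v$ then forces $\hat x_v^{t+1}=\mathrm{sgn}\bigl(\sum_u a_{vu}x_u^{t}-d(v)/2\bigr)=x_v^{t+1}$. Having matched the orbits up to time $|E|+1$, where by Proposition~\ref{prop:energy} the true orbit is already an attractor, the period-$2$ tail is pinned down as well, so $\hat x^T=x^T$; check (iii) then yields $x_v^T=y_v$ at every $v$, i.e. $x^T=y$.

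The main obstacle is exactly the consistency at the heart of the soundness induction: because the prover is untrusted, one must guarantee that the neighbour state node $v$ uses in its majority check is the very state that neighbour certifies for itself. This is why having the decoding depend on the certificate alone is essential, and it is what lets the purely local checks compose into a global proof that the certified orbit is genuine. The remaining care points—bounding the transient by $|E|$ so change-times fit in $\cO(\log n)$ bits, and extrapolating $x_v^T$ from the period-$2$ tail when $T$ exceeds the transient—are routine consequences of Proposition~\ref{prop:energy}.
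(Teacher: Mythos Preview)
Your proof is correct and follows essentially the same approach as the paper: certify each node's orbit by the two initial states plus the list of two-step change times, reconstruct neighbours' orbits from their certificates in one round, and verify the majority rule step by step. Your presentation is arguably tidier than the paper's---you give an explicit induction for soundness, you explicitly handle the case $T>|E|$ by periodic extrapolation (the paper tacitly assumes $T\le N^2$), and you bound change-times by $|E|$ directly rather than by $N^2$---but these are cosmetic differences, not a different method.
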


\begin{proof}
Let \((G,x,T)\) be an instance of the problem \(\electionpred\). Let \(N\) be an upper-bound on the size of \(G\) that is initially known by all vertices. The certificate of  node \(v\in V(G)\) consists in a pair 
 \[\textsf{Certificate}(v) = (\textsf{evenChanges}(v), \textsf{oddChanges}(v))\]
 where:

 \begin{itemize}
\item \(\textsf{evenChanges}(v)\) is a set of pairs \((q,t)\) where \(t \in [N^2]\) is an even time-step such that \(x_v(t+2) = q \neq x_v(t)\). The set \(\textsf{evenChanges}(v)\) also includes the pair \((x_v, 0)\).
\item \(\textsf{oddChanges}(v)\)  is a set of pairs \((q,t)\) where \(t \in [N^2]\) is an odd time-step such that \(x_v(t+2) = q \neq x_v(t)\). The set \(\textsf{oddChanges}(v)\) also includes the pair \((q, 1)\), where \(q\) represents the sate of \(v\) in time-step \(1\). 
 \end{itemize}

\medskip

\noindent{\bf Verification Algorithm.} In the verification round, node \(v\) receives \(\textsf{Certificate}(u)\) for every~\(u\in N(v)\). For each \(u\in N[v]\), vertex \(v\) computes the  vector \(\textsf{orbit}(u)\)  representing the orbit \(u\). Formally, \(v\) computes the vector \(\textsf{orbit}(u) \in \{0,1\}^{N^2+1}\) defined as follows. For each \(i \in \{0, \dots, N^2\}\), we define:
\[t_{\textrm{inf}}(i) = \left\{
\begin{array}{ll} \max \{ t \in \{0, \dots, i\}: \exists q \in \{0,1\} \textrm{ s.t } (t,q) \in \textsf{evenChanges}(u)  & \textrm{if } i \textrm{ is even,}\\
\max \{ t \in \{0, \dots, i\}: \exists q \in \{0,1\} \textrm{ s.t } (t,q) \in \textsf{oddChanges}(u)   & \textrm{if } i \textrm{ is odd.}
\end{array} \right.\]
Then, \(\textsf{orbit}(u)_i = q\) where \(q\) is the state such that \((q,t_{\textrm{inf}}(i)) \in \textsf{evenChanges}(v)\) when \(i\) is even, or where \((q,t_{\textrm{inf}}(i)) \in \textsf{oddChanges}(v)\) when \(i\) is odd. Vertex \(v\) rejects \(\textsf{Certificate}(v)\) if the orbit of \(v\) does not coincide with the majority function on its closed neighborhood. Formally, vertex \(v\) checks that for every \(t \in [N^2]\) the value of \(\textsf{orbit}_t(v) = 1\) if and only if \(\sum_{u \in S(v)} \textsf{orbit}_{t-1}(u) > d(v)/2\), where \(S(v) = N(v)\) when the degree of \(v\) is odd, and \(S(v) = N[v]\) when the degree of \(v\) is even. Finally \(v\) accepts when \(\textsf{orbit}_T(v) = y_v\).\\

\noindent{\bf Completeness and Soundness.} Let us analyze now the completeness and soundness of our proof labeling scheme. 

\medskip

{\it Completeness.} Suppose first that \((G,x,T)\) belongs to \(\electionpred\). Then, we can choose \(\textsf{Certificate}(v)\) as described above, making  every node in \(G\) accept. 

\medskip

{\it Soundness.} Now let us suppose that every node in \(G\) has accepted a given certificate. Following the verification algorithm, \(v\) is capable of reconstruct the orbit of every node \(u\in N[v]\). Then, all the neighbors of vertex \(v\) agree in the same orbit of \(v\). Therefore, if every vertex did not reject in the verification of the orbits, we deduce that \(\textsf{orbit}_t(v) = x_v^t\) for every \(t\in [N^2]\). In particular, we have that \(x^T = y\). \\

\noindent{\bf Size of the Certificates.} For each vertex \(v\), \(\textsf{Certificate}(v)\) can be encoded in \(\cO(\textsf{Changes}(G) \cdot \log n)\) bits. Indeed, each pair \[(q,t) \in \textsf{evenChanges}(v) \cup \textsf{oddChanges}(v)\] can be encoded in \(1 + \log (N^2) = \cO(\log n)\) bits. Moreover, the cardinality of \( \textsf{evenChanges}(v) \cup \textsf{oddChanges}(v)\) is, by definition, at most \(\textsf{Changes}(G)\).
\end{proof}

Theorem~\ref{theo:boundedchange} pipelined with Lemma~\ref{theo:pls}  gives the main result of this section. 

\begin{theorem}\label{coro:main}~

\begin{enumerate}
 \item There is a 1-round proof labeling scheme for \(\electionpred\) with certificates on \(\cO(\log n)\) bits in \(n\)-node networks of sub-exponential growth.
\item There is a 1-round proof labeling scheme for \(\electionpred\) with certificates on \(\cO(\log^2 n)\) bits in \(n\)-node networks of maximum degree \(3\).

\item There is a 1-round proof labeling scheme for \(\electionpred\) with certificates on \(\cO(n^{1-\epsilon}\log n)\) bits in \(n\)-node networks of maximum degree \(\Delta>2\), where \(\varepsilon = {1}/{\Delta \log (\Delta)}\).

\end{enumerate}

\end{theorem}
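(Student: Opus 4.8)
The plan is to pipeline the combinatorial bounds of Theorem~\ref{theo:boundedchange} with the generic proof labeling scheme of Lemma~\ref{theo:pls}. Recall that Lemma~\ref{theo:pls} produces, for any \(n\)-node graph \(G\), a one-round PLS for \(\electionpred\) whose certificates occupy \(\cO(\changes(G)\cdot \log n)\) bits, where \(\changes(G)=\max_{x}\max_{v}\sum_{t>0}c_v^t(x)\). Hence the entire argument reduces to bounding \(\changes(G)\) for each of the three graph families, and this is precisely the content of Theorem~\ref{theo:boundedchange} once the supremum over \(v\) and \(x\) is taken.

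The one point deserving care is that \(\changes(G)\), defined as an unbounded sum over all \(t>0\), is indeed controlled by the finite-horizon bounds of Theorem~\ref{theo:boundedchange}, which are stated for every fixed \(T>0\). First I would invoke Proposition~\ref{prop:energy}: every orbit reaches an attractor after a transient of length at most \(|E|\), and every attractor is either a fixed point or a limit cycle of period \(2\). In both cases \(x_v^{t+2}=x_v^{t}\) for all \(t\) beyond the transient, so \(c_v^t(x)=|x_v^{t+1}-x_v^{t-1}|=0\) once the attractor is reached. Consequently the series \(\sum_{t>0}c_v^t(x)\) has only finitely many nonzero terms and equals \(\sum_{t=1}^{T}c_v^t(x)\) for every \(T\) exceeding the transient length. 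Taking such a \(T\) in Theorem~\ref{theo:boundedchange} yields \(\changes(G)=\cO(1)\) for graphs of sub-exponential growth, \(\changes(G)=\cO(\log n)\) for graphs of maximum degree \(3\), and \(\changes(G)=\cO(n^{1-\varepsilon})\) for graphs of maximum degree \(\Delta\), with \(\varepsilon=\frac{\log(\Delta+2)}{\log\Delta}-1\).

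Substituting each of these bounds into the certificate size \(\cO(\changes(G)\cdot\log n)\) of Lemma~\ref{theo:pls} gives certificates of size \(\cO(\log n)\), \(\cO(\log^2 n)\), and \(\cO(n^{1-\varepsilon}\log n)\) respectively, and the verification inherited from Lemma~\ref{theo:pls} is one round in each case, establishing (1), (2), and (3). The only remaining adjustment concerns the exponent in (3): Theorem~\ref{theo:boundedchange} guarantees the larger exponent \(\varepsilon=\frac{\log(\Delta+2)}{\log\Delta}-1\), whereas the statement here is phrased with the cleaner value \(\varepsilon=1/(\Delta\log\Delta)\). Since Theorem~\ref{theo:boundedchange} also records the inequality \(\frac{\log(\Delta+2)}{\log\Delta}-1>\frac{1}{\Delta\log\Delta}\), the true exponent is strictly larger, so for \(n\ge 1\) we have \(n^{1-(\log(\Delta+2)/\log\Delta-1)}\le n^{1-1/(\Delta\log\Delta)}\), and the stated (weaker) upper bound follows immediately.

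The main obstacle, mild as it is, is exactly this passage from the finite-horizon sums of Theorem~\ref{theo:boundedchange} to the unbounded sum defining \(\changes(G)\), which rests on the stabilization guaranteed by Proposition~\ref{prop:energy}; everything else is a mechanical substitution of asymptotic estimates together with the routine observation that loosening \(\varepsilon\) preserves the upper bound.
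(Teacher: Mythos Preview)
Your proposal is correct and follows exactly the paper's approach, which is the one-line observation that Theorem~\ref{theo:boundedchange} pipelined with Lemma~\ref{theo:pls} yields the result. You simply supply two details the paper leaves implicit: that the uniform-in-\(T\) bounds of Theorem~\ref{theo:boundedchange} transfer to the infinite sum defining \(\changes(G)\), and that the exponent \(\varepsilon=1/(\Delta\log\Delta)\) in the statement is dominated by the sharper \(\varepsilon=\log(\Delta+2)/\log\Delta-1\) actually proved, so the weaker bound follows.
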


\subsection{Upper-bound for~\electionpredi}

We now show the proof labeling schemes for \electionpredi. Our bounds of the size of the certificates is obtained from Theorem~\ref{coro:main} and the following result. Let us define \(\textsc{Count-Ones}\) as the problem of deciding, given a configuration \(x\) and a constant \(k\), if in the graph there are exactly \(k\) nodes in state \(1\). Formally, 

\[\textsc{Count-Ones} = \left\{(G,(x,k)): x: V \rightarrow \{0,1\} ,~ k\geq 0 ~\textrm{and}~ \sum_{v\in V(G)} x_v = k\right\} \]

In \cite{KormanKP10} it is shown that there is a PLS for \textsc{Count-Ones} with certificates of size \(\cO(\log n)\). 

\begin{lemma}\label{lem:count-ones}(see \cite{KormanKP10})
There is a proof labeling scheme for \(\textsc{Count-Ones}\) with certificates of size~\(\cO(\log n)\). 
\end{lemma}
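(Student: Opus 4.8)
The plan is to certify the count by the standard spanning-tree aggregation technique that yields $\cO(\log n)$-bit certificates for global counting. First I would have the prover fix a spanning tree $\mathcal{T}$ of $G$ rooted at an arbitrary node $r$. The certificate of each node $v$ encodes four fields: the identifier $\id(r)$ of the root, the identifier of $v$'s parent in $\mathcal{T}$ (with a dedicated flag when $v=r$), the distance from $v$ to $r$ in $\mathcal{T}$, and an integer $c_v \in \{0,\dots,n\}$ claimed to be the number of nodes in state $1$ in the subtree of $\mathcal{T}$ rooted at $v$. Since identifiers lie in $[n^c]$, distances are at most $n$, and each partial count is at most $n$, all four fields together fit in $\cO(\log n)$ bits.

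The verifier runs a single communication round. Each node $v$ first checks the tree structure: it verifies that all its neighbors report the same root identifier, that the declared parent is an actual neighbor whose claimed distance is exactly one less than its own, and that $v$ carries the root flag if and only if $\id(v)=\id(r)$ and its distance is $0$. Then $v$ aggregates: it identifies its children as the neighbors whose parent pointer equals $\id(v)$ and whose distance is one more than its own, and checks the local recurrence $c_v = x_v + \sum_{u \text{ child of } v} c_u$. Finally the root checks $c_r = k$ (where $k$ is the globally known target; if $k$ is supplied as a distributed input it is also checked for consistency across neighbors, which propagates by connectivity to a single value).

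For completeness, on a yes-instance the prover supplies the true tree, the genuine distances, and the honest subtree counts, and every local test passes. For soundness, suppose every node accepts. Equality of root identifiers across neighbors, together with connectivity of $G$, forces a single common root label, and since identifiers are unique this pins down exactly one node $r$. The strictly decreasing distance condition along parent pointers rules out cycles, since a cycle among the pointers would require the distance labels to strictly decrease all the way around it, which is impossible; hence following parents from any node terminates at $r$, so the pointers define a genuine spanning tree. An induction from the leaves upward, using the aggregation check at each node, then shows that every $c_v$ equals the true number of $1$'s in its subtree, whence $c_r$ is the true total and the test $c_r = k$ certifies the count exactly.

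The main obstacle is preventing a dishonest prover from faking the aggregation by declaring an inconsistent or cyclic parent structure; this is precisely what the distance field neutralizes, as argued above. This scheme is exactly the one of \cite{KormanKP10}, which we restate here only for completeness.
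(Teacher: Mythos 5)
Your proposal is correct and follows exactly the same construction the paper sketches (which is the standard spanning-tree counting scheme of Korman, Kutten, and Peleg): a certificate per node consisting of root identifier, parent identifier, distance to the root, and subtree count of $1$'s, with local coherence checks and the root comparing its count to $k$. Your write-up simply makes explicit the soundness details (uniqueness of the root, acyclicity via strictly decreasing distances, and leaf-to-root induction) that the paper leaves implicit by citation.
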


Roughly, the idea of the PLS of Lemma \ref{lem:count-ones} the following: the certificate of a node \(v\) is a tuple \((\textsf{root}, \textsf{parent}, \textsf{distance}, \textsf{count})\) where \(\textsf{root}\) is the identifier of the root of a rooted spanning tree \(\tau\) of \(G\), \(\textsf{parent}\) is the identifier of the parent of \(v\) in \(\tau\), \(\textsf{distance}\) is the distance of \(v\) to the root and \( \textsf{count}\) is the number of nodes in state \(1\) in the subgraph of \(G\) induced by the descendants of \(v\) in \(\tau\). Then, every vertex checks the local coherence of the certificates, and the root also checks whether \(\textsf{count}\) equals \(k\). The upper bounds for \electionpredi\ follow directly from Theorem \ref{coro:main} and Lemma \ref{lem:count-ones}.

\clearpage
\begin{theorem}\label{coro:electionpredi}~
\begin{enumerate}
 \item There is a 1-round proof labeling scheme for \(\electionpredi\) with certificates on \(\cO(\log n)\) bits in \(n\)-node networks of sub-exponential growth.
\item There is a 1-round proof labeling scheme for \(\electionpredi\) with certificates on \(\cO(\log^2 n)\) bits in \(n\)-node networks of maximum degree \(3\).

\item There is a 1-round proof labeling scheme for \(\electionpredi\) with certificates on \(\cO(n^{1-\epsilon}\log n)\) bits in \(n\)-node networks of maximum degree \(\Delta>2\), where \(\varepsilon = {1}/{\Delta \log (\Delta)}\).

\end{enumerate}

\end{theorem}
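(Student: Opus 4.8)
The plan is to compose the proof labeling scheme for \electionpred\ from Theorem~\ref{coro:main} with the \textsc{Count-Ones} scheme of Lemma~\ref{lem:count-ones}. The key observation is that in the verification of \electionpred, each node \(v\) already reconstructs the value \(\textsf{orbit}_T(v)\), which---once its certificate is accepted---equals its true state \(x_v^T\) at time \(T\). Thus, after running the \electionpred\ verifier, every node holds a locally verified bit \(y_v := x_v^T\). The remaining task is to count how many of these bits equal \(1\) and to compare this count against \(|V|/2\), which is exactly what a spanning-tree counting scheme provides.

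Concretely, I would give each node \(v\) a certificate with two parts. The first part is the \electionpred\ certificate of Lemma~\ref{theo:pls} (the \textsf{evenChanges} and \textsf{oddChanges} lists), of size \(\cO(\changes(G)\cdot\log n)\). The second part is a \textsc{Count-Ones}-style tuple \((\textsf{root},\textsf{parent},\textsf{distance},\textsf{count}_1,\textsf{count}_V)\) of size \(\cO(\log n)\), where \(\textsf{root}\), \(\textsf{parent}\) and \(\textsf{distance}\) describe a rooted spanning tree \(\tau\), the value \(\textsf{count}_1\) is the number of nodes in state \(1\) at time \(T\) among the descendants of \(v\) in \(\tau\), and \(\textsf{count}_V\) is the total number of descendants of \(v\) in \(\tau\) (including \(v\)). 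Verification proceeds in two 1-round stages: first, each node runs the \electionpred\ verifier to obtain and certify \(y_v=\textsf{orbit}_T(v)\); second, each node checks the tree-consistency conditions of Lemma~\ref{lem:count-ones}, namely that \(\textsf{root}\) and \(\textsf{distance}\) are coherent with its neighbors, that \(\textsf{count}_1(v)=y_v+\sum_{u:\,\textsf{parent}(u)=v}\textsf{count}_1(u)\), and that \(\textsf{count}_V(v)=1+\sum_{u:\,\textsf{parent}(u)=v}\textsf{count}_V(u)\). Finally, the root accepts iff \(\textsf{count}_1(\textsf{root})>\textsf{count}_V(\textsf{root})/2\).

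Completeness is immediate: given \((G,(x,T))\in\electionpredi\), the prover supplies the true orbit certificates and the true subtree counts on any fixed spanning tree, so every node accepts. For soundness, suppose every node accepts. By soundness of the \electionpred\ scheme, the reconstructed bit \(y_v\) equals \(x_v^T\) for every \(v\); by soundness of the counting scheme, \(\textsf{count}_1(\textsf{root})=\sum_v y_v=\sum_v x_v^T\) and \(\textsf{count}_V(\textsf{root})=|V|\). The root's final test therefore certifies \(\sum_v x_v^T > |V|/2\), so \((G,(x,T))\in\electionpredi\). The certificate size is \(\cO(\changes(G)\cdot\log n)+\cO(\log n)=\cO(\changes(G)\cdot\log n)\), and substituting the bounds on \(\changes(G)\) from Theorem~\ref{theo:boundedchange}---namely \(\cO(1)\), \(\cO(\log n)\) and \(\cO(n^{1-\varepsilon})\) in the three regimes---yields the three claimed bounds \(\cO(\log n)\), \(\cO(\log^2 n)\) and \(\cO(n^{1-\varepsilon}\log n)\).

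I do not expect a genuine obstacle, since the result is a clean composition of two established schemes; the only point requiring care is that the counting layer must operate on the \emph{reconstructed} configuration \(x^T\) rather than on an externally supplied input. This is harmless because each node's bit \(y_v\) is determined by its own verified orbit, so it can be fed into the spanning-tree summation exactly as an input bit would be, and the two soundness guarantees compose without interference.
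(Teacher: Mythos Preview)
Your proposal is correct and follows essentially the same approach as the paper: compose the \electionpred\ scheme with spanning-tree counting (Lemma~\ref{lem:count-ones}) to tally the states at time \(T\) and compare against \(|V|/2\). The only cosmetic differences are that the paper has the prover hand each node \(y_v\), \(n_v\), \(p_v\) explicitly and invokes \textsc{Count-Ones} twice as a black box (once for \(|V|\), once for \(\sum_v y_v\)), whereas you derive \(y_v\) from the orbit certificate and merge the two counts into a single tree; neither change affects correctness or the asymptotic certificate size.
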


\begin{proof}
Given an instance \(G,x, T\) of \electionpredi, the protocol consists in giving each node \(v\)
\begin{enumerate}
\item The state \(y_v\) of \(v\) on time-step \(T\). 
\item An integer \(n_v\) representing the number of nodes in \(G\)
\item An integer \(p_v\) representing \(\sum_{v\in V} y_v\).
\item The certification of \electionpred\ of instance \((G,x,(y_v)_{v\in V},T)\).
\item The certification of \(\textsc{Count-Ones}\) on instance \((G,(1)_{v \in V},n_v)\).
\item The certification of \(\textsc{Count-Ones}\) on instance \((G,(y)_{v \in V},p_v)\)
\end{enumerate}
In the verification round \(v\) simulates the verification round of the PLS for \electionpred\ and \(\textsc{Count-Ones}\) using the corresponding certificates, and rejects if any of the simulations rejects. Finally, \(v\) accepts if and only if \(p_v/n_v > 1/2\). The completeness and soundness of the protocol follow from Theorem~\ref{theo:pls} and Lemma~\ref{lem:count-ones}. The bound on the size of the certificates is obtained by Theorem~\ref{theo:boundedchange}.
\end{proof}

\section{Lower-bounds}

We first prove that every locally checkable proof for problems \(\electionpred\) or \(\electionpredi\) on arbitrary \(n\)-node graphs require certificates of size \(\Omega(n)\). The proof is a reduction from the disjointedness problem in non-deterministic communication complexity. In this problem, Alice receives a vector \(a\in \{0,1\}^n\) and Bob a vector \(b \in \{0,1\}^n\). The players can perform a series of communication rounds and have the task of deciding whether there exists a coordinate \(i\in \{1,\dots,n\}\) such that \(a_i = b_i\). In \cite{kushilevitz1997communication} it is shown that the non-deterministic communication complexity of this problem is \(\Omega(n)\).

\begin{theorem}\label{theo:lowerbound}
Every locally checkable proof certifying  \(\electionpredi\) or \(\electionpred\) on arbitrary \(n\)-node graphs has certificates on \(\Omega(n)\) bits. 
  %There is no proof labeling scheme for \(\prediction\) with certificates of size \(o(n)\).
\end{theorem}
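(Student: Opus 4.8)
The plan is to reduce the disjointness problem (\textsc{Disj}) in non-deterministic communication complexity to the certification of \(\electionpred\) (and \(\electionpredi\)). The key is to design, from inputs \(a,b \in \{0,1\}^n\), a graph \(G_{a,b}\) together with an initial configuration whose majority dynamics reveals whether there is a coordinate \(i\) with \(a_i = b_i = 1\) (the standard \textsc{Disj} phrasing asks whether the sets are disjoint, so I will set up the instance so that the dynamics ``detects'' a common \(1\)). Concretely, I would build a graph split into Alice's side and Bob's side joined by a sparse cut (ideally a cut of \(O(1)\) or \(O(\log n)\) edges), where Alice's side encodes \(a\) and Bob's side encodes \(b\), for instance via \(n\) parallel gadgets, the \(i\)-th gadget being ``activated'' only when \(a_i = b_i = 1\). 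The dynamics should then propagate this activation across the cut and flip the state (or the majority count) of some designated witness vertex exactly when the sets intersect.

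Then I would invoke the standard simulation argument. Suppose, toward a contradiction, that there is a locally checkable proof for \(\electionpred\) with certificates of size \(s(n) = o(n)\), using any constant number \(t\) of verification rounds. Alice and Bob jointly simulate the verifier on \(G_{a,b}\): the prover's certificate assignment is guessed non-deterministically, Alice runs the verifier at all nodes on her side and Bob at all nodes on his side. The only information that must cross between the players is the certificates and states of nodes within distance \(t\) of the cut. Since the cut has few edges and each certificate has \(o(n)\) bits, the total communication is \(o(n)\). Completeness and soundness of the proof labeling scheme translate into a correct non-deterministic protocol for \textsc{Disj}, contradicting the \(\Omega(n)\) lower bound of \cite{kushilevitz1997communication}. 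The same construction handles \(\electionpredi\) by choosing the witness mechanism so that intersection tips the global majority.

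The main obstacle, and where most of the care is needed, is the \emph{gadget design}: I must engineer an initial configuration on \(G_{a,b}\) so that (i) the majority rule genuinely computes an OR (or AND) of the coordinatewise conjunctions \(a_i \wedge b_i\), (ii) the relevant behaviour manifests at a prediction time \(T\) that is polynomially bounded (so \(T \in [N^2]\) and the reduction is legitimate), and (iii) the number of edges crossing the Alice/Bob cut is small enough that the communication stays \(o(n)\). A delicate point is controlling ties and the even/odd-degree self-loop convention of the majority rule so that each gadget behaves deterministically and independently; I would likely use auxiliary ``bias'' vertices of fixed state to force unambiguous majorities and to keep each coordinate gadget from interfering with the others. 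Once the gadget reliably signals intersection through a single witness vertex whose state at time \(T\) (for \(\electionpred\)) or whose contribution to the global count (for \(\electionpredi\)) encodes the \textsc{Disj} answer, the communication-complexity reduction is routine.
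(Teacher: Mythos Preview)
Your high-level framework is exactly right and matches the paper: reduce from non-deterministic \textsc{Disj}, build a graph split by a sparse cut into an Alice side and a Bob side, and simulate the verifier with communication proportional to (cut size)${}\times{}$(certificate size). The simulation argument you sketch is the standard one and is precisely what the paper does.

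The gap is in the gadget design, and it is not merely a matter of care with ties and biases. Your concrete suggestion --- ``$n$ parallel gadgets, the $i$-th activated only when $a_i=b_i=1$'' --- is in direct tension with the sparse-cut requirement you correctly list as condition~(iii). Each coordinate gadget must see both $a_i$ and $b_i$, so a spatial layout with $n$ independent gadgets forces $\Omega(n)$ edges across the Alice/Bob cut; the simulation then costs $\Omega(n)\cdot s(n)$ bits and the contradiction evaporates. You need a mechanism that tests all $n$ coordinates while keeping the cut $O(1)$.

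The paper's resolution is to encode the inputs \emph{temporally} rather than spatially. A \emph{sequencer gadget} (built from \emph{timer gadgets}) forces a single distinguished node $v_A$ on Alice's side to take the sequence of states $a_1,a_2,\dots,a_n$ over successive time-steps, and similarly $v_B$ on Bob's side plays $b_1,\dots,b_n$. Both $v_A$ and $v_B$ are connected to one node $v$ of an \emph{amplifier gadget}; this $v$ flips (and cascades a change of global majority) precisely if at some step both its outside neighbors are simultaneously in state~$1$, i.e.\ if $\exists i$ with $a_i=b_i=1$. This yields a cut of a single edge between Alice's sequencer and the rest, so the communication is $O(s(n))$. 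The amplifier also handles \electionpredi\ directly, since its size is chosen large enough to dominate the global count. The non-obvious idea you are missing, then, is this time-multiplexing of the $n$ coordinates through one node.
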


The construction has three main parts: one produces the sequence of bits of Alice at some node, one produces the sequence of bits of Bob at some other node, and the last part is used to change the majority of the entire graph forever as soon as Alice and Bob have a 1 in their sequence at a common position.

The parts producing the sequence are called \emph{sequencer gadgets} and they are using high degree nodes. Such sequencer gadget are based on building blocks called \emph{timer gadgets}.
A timer gadget is a subgraph that enforces a state change $0\to 1$  or \(1\to 0\) at some nodes,  at specific times, independently of the context to which these nodes are connected (provided they have a single neighbor outside the gadget). 

%\begin{center}
  %\includegraphics[width=.8\textwidth]{timer_gadget.png}
%\end{center}

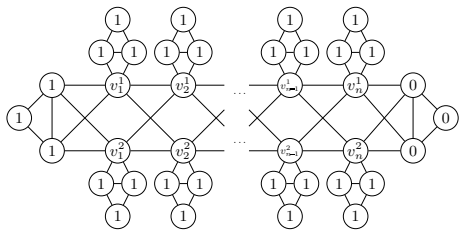
\begin{figure}
\centering
\resizebox*{0.5\textwidth}{!}{\tikzstyle{vertice}=[scale=1.5,draw,circle,minimum size=0.5cm,inner sep=0pt]
\begin{tikzpicture}

\node[vertice] (v29) at (-3.5,2.5) {$v^1_1$};
\node [vertice] (v27) at (-1.5,2.5) {$v^1_2$};

\node [] at (0.25,2.25) {$\cdots$};
\node [scale=1.0,draw,circle,minimum size=0.75cm,inner sep=0pt] (v33) at (1.75,2.5) {$v^1_{n\shortminus1}$};
\node [vertice] (v32) at (3.75,2.5) {$v^1_n$};

\node [vertice] (v1) at (-4,3.5) {1};
\node [vertice] (v3) at (-3,3.5) {1};
\node [vertice] (v2) at (-3.5,4.5) {1};
\node [vertice] (v4) at (-2,3.5) {1};
\node [vertice] (v5) at (-1,3.5) {1};
\node [vertice] (v6) at (-1.5,4.5) {1};

\node [vertice] (v7) at (1.25,3.5) {1};
\node [vertice] (v9) at (2.25,3.5) {1};
\node [vertice] (v8) at (1.75,4.5) {1};
\node [vertice] (v10) at (3.25,3.5) {1};
\node [vertice] (v12) at (4.25,3.5) {1};
\node [vertice] (v11) at (3.75,4.5) {1};

\node [vertice] (v25) at (-5.5,2.5) {1};
\node [vertice] (v0) at (-6.5,1.5) {1};

\node[vertice] (v26) at (-3.5,0.5) {$v^2_1$};
\node [vertice] (v30) at (-1.5,0.5) {$v^2_2$};

\node [] at (0.25,0.75) {$\cdots$};

\node [scale=1.0,draw,circle,minimum size=0.75cm,inner sep=0pt] (v31) at (1.75,0.5) {$v^2_{n\shortminus1}$};
\node [vertice] (v34) at (3.75,0.5) {$v^2_n$};
\node [vertice] (v35) at (5.5,2.5) {0};
\node [vertice] (v36) at (5.5,0.5) {0};
\node [vertice] (v37) at (6.5,1.5) {0};

\node [vertice] (v13) at (-4,-0.5) {1};
\node [vertice] (v14) at (-3,-0.5) {1};
\node [vertice] (v15) at (-3.5,-1.5) {1};
\node [vertice] (v16) at (-2,-0.5) {1};
\node [vertice] (v17) at (-1,-0.5) {1};
\node [vertice] (v18) at (-1.5,-1.5) {1};

\node [vertice] (v19) at (1.25,-0.5) {1};
\node [vertice] (v20) at (2.25,-0.5) {1};
\node [vertice] (v21) at (1.75,-1.5) {1};
\node [vertice] (v22) at (3.25,-0.5) {1};
\node [vertice] (v23) at (4.25,-0.5) {1};
\node [vertice] (v24) at (3.75,-1.5) {1};

\node [vertice] (v28) at (-5.5,0.5) {1};
\draw (v0) edge (v28);
\draw (v0) edge (v25);
\draw (v25) edge (v28);
\draw  (v1) edge (v2);
\draw  (v2) edge (v3);
\draw  (v1) edge (v3);
\draw  (v4) edge (v5);
\draw  (v6) edge (v5);
\draw  (v4) edge (v6);
\draw (v7) -- (v8) -- (v9) -- (v7);
\draw (v10) -- (v11);
\draw (v11) -- (v12) -- (v10);
\draw (v35) edge (v36);
\draw (v37) edge (v36);
\draw (v35) edge (v37);
\draw (v35) edge (v32);
\draw (v36) edge (v34);
\draw (v35) edge (v34);
\draw (v36) edge (v32);

\draw (v13) -- (v14) -- (v15) -- (v13);
\draw (v16) -- (v17) -- (v18) -- (v16);
\draw (v19) -- (v20) -- (v21) -- (v19);
\draw (v22) -- (v23) -- (v24) -- (v22);
\draw (v25) -- (v26) -- (v27) -- (-0.25,1.25) node [] {};
\draw (v28) -- (v29) -- (v30) -- (-0.25,1.75) node [] {};
\draw (v25) -- (v29) -- (v27) -- (-0.25,2.5) node [] {};
\draw (v28) -- (v26) -- (v30) -- (-0.25,0.5) node [] {};
\draw (v29) -- (v1);
\draw (v3) -- (v29);
\draw (v4) -- (v27) -- (v5);
\draw (v13) -- (v26) -- (v14);
\draw (v16) -- (v30) -- (v17);
\draw (0.5,1.75) node [] {} -- (v31) -- (v32);
\draw (0.5,1.25) node [] {} -- (v33) -- (v34);
\draw (v31) -- (v34);
\draw (v33);
\draw (v32);
\draw (v33) -- (v32);
\draw (0.5,2.5) node [] {} -- (v33);
\draw (0.5,0.5) node [] {} -- (v31);

\draw (v7);
\draw (v7) -- (v33) -- (v9);
\draw (v10) -- (v32) -- (v12);
\draw (v19) -- (v31) -- (v20);
\draw (v22) -- (v34) -- (v23);
\end{tikzpicture}}
\caption{\label{fig:timergadget}Timer gadget from Lemma~\ref{lem:timergadget}.} 
\end{figure}

\begin{lemma}[Timer gadget]\label{lem:timergadget}
  For any $n$, there exists a graph $G=(V,E)$ of size ${O(n)}$ with $2n$ distinguished nodes ${\{v^1_t,v^2_t: 1\leq t\leq n\}\subseteq V}$ and an initial configuration ${x\in \{0,1\}^V}$ for $G$ such that, for any graph $H=(V',E')$ containing $G$ as induced subgraph and ${y\in \{0,1\}^{V'}}$ an initial configuration for $H$, if
  \begin{itemize}
  \item ${y_{|V} = x}$ and
  \item nodes ${v_t^j}$ have at most one neighbor outside $V$ in $H$ and no other node of $V$ has a neighbor outside $V$,
  \end{itemize}
  then the orbit ${(y^t)_{t\geq 0}}$ of configurations of $H$ starting from $y=y^0$ verifies
  \[y^t_{v^j_k} =
    \begin{cases}
      0 &\text{ if ${t< k}$,}\\
      1 &\text{ else,}
    \end{cases}
  \]
  for ${1\leq k\leq n}$ and ${j\in\{1,2\}}$.
\end{lemma}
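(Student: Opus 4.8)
The plan is to realise the explicit construction of Figure~\ref{fig:timergadget} and to prove the timing formula by induction on $t$, after first isolating the robustness to the ambient graph $H$ into a separate invariant. The gadget is built from two horizontal rows of distinguished nodes, $v^1_1,\dots,v^1_n$ and $v^2_1,\dots,v^2_n$, wired column against column through a ladder-with-crossings pattern, and equipped with three families of auxiliary nodes: for every column a small triangle whose three vertices start in state $1$, a single triangle of $1$'s glued to the left boundary, and a single triangle of $0$'s glued to the right boundary. The design goal is that the left boundary behaves like an always-$1$ ``column~$0$'' and the right boundary like an always-$0$ ``column~$n+1$'', so that every $v^j_k$ has the same local structure as an interior node and a front of $1$'s sweeps across the columns at speed exactly one column per time-step. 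First I would fix the construction precisely, recording for each node its internal neighbourhood and its initial state in $x$, checking that every distinguished node has internal degree $6$ (two triangle-neighbours, plus two neighbours in each adjacent column via the crossings) and that the gadget has size $O(n)$.

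The first half of the argument concerns the auxiliary nodes, none of which is allowed an external neighbour by hypothesis, so their orbits are computed exactly as in the isolated gadget $G$. I would show that each all-$1$ triangle is frozen: every vertex of such a triangle has at least two triangle-neighbours in state $1$, a strict majority of its closed neighbourhood, hence it remains $1$ at every step; and that the right-boundary $0$-triangle is frozen as well, this time using the tie-breaking convention of the majority rule, since a boundary vertex eventually sees two $1$'s and two $0$'s and therefore retains its state $0$. These frozen nodes then act as constant $1$- and $0$-inputs feeding the distinguished rows.

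The second half is the \emph{locking property}, which is where the hypotheses of the lemma are used. Each $v^j_k$ has internal degree $6$, and in every configuration arising along the dynamics its internal neighbours force the majority rule with margin at least $2$, meaning that the number of internal $1$-neighbours differs from $3$ by at least two. Since each $v^j_k$ has at most one neighbour outside $V$ and no other node of $V$ touches the rest of $H$, adding a single external vote changes the count by at most one and also toggles the parity of the degree; I would check by a short case analysis that, thanks to the margin $2$, neither effect can flip the sign of the majority computation, so the orbit of every node of $V$ in $H$ coincides with its orbit in the isolated $G$. Granting this, it remains to analyse $G$ alone, which I would do by proving inductively that, for $1\le k\le n$, at time $t$ the column $k$ is wholly in state $1$ if and only if $k\le t$: the base case is the initial configuration, and in the inductive step the leftmost not-yet-converted column sees its attached $1$-triangle together with the already-converted column to its left, a total of four $1$'s against two $0$'s, so it flips one step later, while columns further to the right still see a two-against-four majority of $0$'s and stay put; monotonicity (a converted column never reverts) follows from the same count. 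This yields $y^t_{v^j_k}=1$ iff $t\ge k$, as claimed.

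I expect the main obstacle to be calibrating the construction so that the three requirements hold at once: the front must advance by exactly one column per step, every distinguished node must keep a margin of at least $2$ so that the single external neighbour is never pivotal, and the two boundary triangles must be tuned so that the first flip occurs at $t=1$ and the last at $t=n$. The delicate point in the locking property is that attaching one external neighbour turns the even internal degree $6$ into an odd degree $7$ in $H$, which changes whether the node counts its own state; the verification must therefore track this parity change together with the extra vote, and it is precisely the uniform margin $2$ that keeps every such comparison strict. The remaining work is the routine but lengthy case analysis of the majority computation at each type of node, which I would organise according to whether a node is a triangle vertex, a boundary vertex, or a distinguished node to the left of, at, or to the right of the front.
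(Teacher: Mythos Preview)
Your proposal is correct and follows essentially the same approach as the paper: build the gadget of Figure~\ref{fig:timergadget}, argue that all auxiliary triangle nodes are frozen, observe that every distinguished node has internal degree $6$ with a $4$--$2$ split at every step so that a single external neighbour cannot change the outcome, and then run the obvious induction showing the $1$-front advances one column per step. Your write-up is in fact more explicit than the paper's about the ``locking property'' (the margin-$2$ robustness under the parity change from degree $6$ to $7$) and about the role of the tie-breaking rule at the boundary triangles; the paper compresses all of this into a single sentence.
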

\begin{proof}
  Consider for $G$ the graph of Figure~\ref{fig:timergadget} with distinguished nodes $v_i^j$ and initial configuration $x$ where all distinguished nodes are in state $0$ and all other nodes are in the state specified by the figure.
  Note first that in $x$, all nodes which are not among the $v_i^j$ belongs to a triangle of nodes in the same state, and have at most two neighbors outside the triangle.
  Therefore they maintain a local majority corresponding to their own state and will never change of state (recall that by hypothesis they don't have neighbors outside $G$).
  In $x$, all distinguished nodes have $4$ neighbors in $G$ in state $0$, and $2$ neighbors in $G$ in state $1$, except $v_1^1$ and $v_1^2$ with respectively $2$ for state $0$ and $4$ for state $1$.
  Therefore, at step $1$ and independently of the state of possible neighbors outside $G$ (at most one for each $v_i^j$), nodes $v_1^1$ and $v_1^2$ turn into state $1$ and all other distinguished nodes stay in state $0$.
  With the same reasoning, it is straightforward to show by induction that, at step $t$, node $v_i^j$ is $0$ if and only if $t<i$.
\end{proof}

Of course a symmetric timer gadget that triggers $1\to 0$ state changes can be obtained the same way.
A sequencer gadget is a subgraph that enforces an arbitrary sequence of states at some node, independently of the context to which this node is connected (provided it has a single neighbor outside the gadget). 

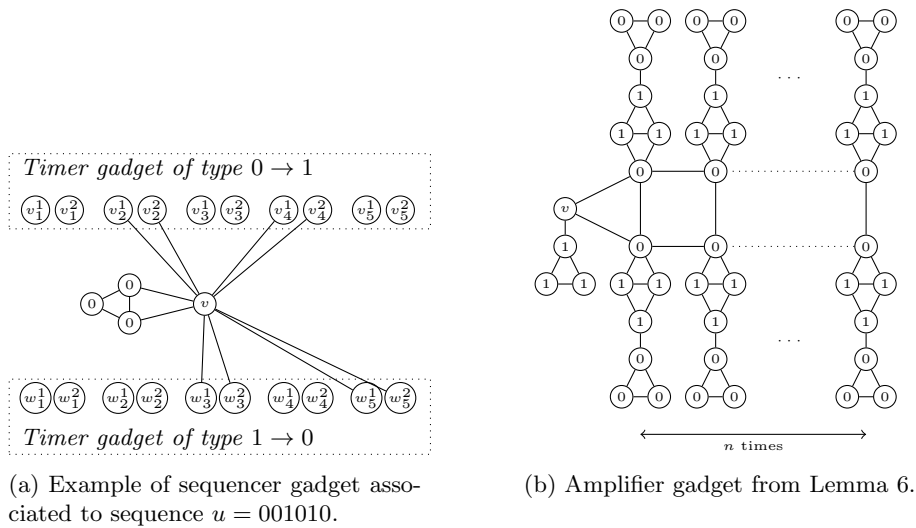
\begin{figure}
    \centering

\begin{subfigure}[t]{0.45\textwidth}
  \centering
      \tikzstyle{vertice}=[scale=1,draw,circle,minimum size=0.3cm,inner sep=0pt]
  \begin{tikzpicture}\tiny
    % v and clique
    \begin{scope}[shift={(.5,0)}]
      \node[vertice] (v1) at (0,0) {$v$};
      \node[vertice] (v2) at (-1,.25) {$0$};
      \node[vertice] (v3) at (-1,-.25) {$0$};
      \node[vertice] (v4) at (-1.5,0) {$0$};
      \draw (v1) edge (v2);
      \draw (v1) edge (v3);
      \draw (v3) edge (v2);
      \draw (v4) edge (v2);
      \draw (v3) edge (v4);
    \end{scope}
    % top timer gadget
    \draw[dotted] (-2.1,1) rectangle (3.5,2);
    \draw (0,1.8) node {\small\it Timer gadget of type $0\to 1$};
    \node[vertice] (v2) at (-1.75,1.25) {$v_1^1$};
    \node[vertice] (v3) at (-1.3,1.25) {$v_1^2$};
    \begin{scope}[shift={(2.2,0)}]
      \node[vertice] (v2) at (-1.75,1.25) {$v_3^1$};
      \node[vertice] (v3) at (-1.3,1.25) {$v_3^2$};
    \end{scope}
    \begin{scope}[shift={(4.4,0)}]
      \node[vertice] (v2) at (-1.75,1.25) {$v_5^1$};
      \node[vertice] (v3) at (-1.3,1.25) {$v_5^2$};
    \end{scope}
    \begin{scope}[shift={(1.1,0)}]
      \node[vertice] (v2) at (-1.75,1.25) {$v_2^1$};
      \node[vertice] (v3) at (-1.3,1.25) {$v_2^2$};
      \draw (v1) edge (v2);
      \draw (v1) edge (v3);
    \end{scope}
    \begin{scope}[shift={(3.3,0)}]
      \node[vertice] (v2) at (-1.75,1.25) {$v_4^1$};
      \node[vertice] (v3) at (-1.3,1.25) {$v_4^2$};
      \draw (v1) edge (v2);
      \draw (v1) edge (v3);
    \end{scope}
    % bottom timer gadget
    \draw[dotted] (-2.1,-1) rectangle (3.5,-2);
    \draw (0,-1.8) node {\small\it Timer gadget of type $1\to 0$};
    \node[vertice] (v2) at (-1.75,-1.25) {$w_1^1$};
    \node[vertice] (v3) at (-1.3,-1.25) {$w_1^2$};
    \begin{scope}[shift={(1.1,-0)}]
      \node[vertice] (v2) at (-1.75,-1.25) {$w_2^1$};
      \node[vertice] (v3) at (-1.3,-1.25) {$w_2^2$};
    \end{scope}
    \begin{scope}[shift={(3.3,-0)}]
      \node[vertice] (v2) at (-1.75,-1.25) {$w_4^1$};
      \node[vertice] (v3) at (-1.3,-1.25) {$w_4^2$};
    \end{scope}
    \begin{scope}[shift={(2.2,-0)}]
      \node[vertice] (v2) at (-1.75,-1.25) {$w_3^1$};
      \node[vertice] (v3) at (-1.3,-1.25) {$w_3^2$};
      \draw (v1) edge (v2);
      \draw (v1) edge (v3);
    \end{scope}
    \begin{scope}[shift={(4.4,-0)}]
      \node[vertice] (v2) at (-1.75,-1.25) {$w_5^1$};
      \node[vertice] (v3) at (-1.3,-1.25) {$w_5^2$};
      \draw (v1) edge (v2);
      \draw (v1) edge (v3);
    \end{scope}
  \end{tikzpicture}
  
  
      \caption{Example of sequencer gadget associated to sequence ${u=001010}$.}
  \label{fig:sequencergadget}
\end{subfigure}\hfill
\begin{subfigure}[t]{0.45\textwidth}
  \centering
  \tikzstyle{vertice}=[scale=1,draw,circle,minimum size=0.3cm,inner sep=0pt]
  \begin{tikzpicture}\tiny
    % input node v plus clique
    \node[vertice] (v1) at (0,0) {$v$};
    \node[vertice] (v2) at (0,-.5) {$1$};
    \node[vertice] (v3) at (.25,-1) {$1$};
    \node[vertice] (v4) at (-.25,-1) {$1$};
    \draw (v1) edge (v2);
    \draw (v3) edge (v2);
    \draw (v4) edge (v2);
    \draw (v3) edge (v4);
    % first strip
    \node[vertice] (v2) at (1,.5) {$0$};
    \node[vertice] (v3) at (1,-.5) {$0$};
    \draw (v1) edge (v2);
    \draw (v1) edge (v3);
    \draw (v2) edge (v3);
    % --- bottom
    \node[vertice] (v4) at (1.25,-1) {$1$};
    \node[vertice] (v5) at (.75,-1) {$1$};
    \node[vertice] (v6) at (1,-1.5) {$1$};
    \draw (v4) edge (v3);
    \draw (v5) edge (v3);
    \draw (v4) edge (v6);
    \draw (v5) edge (v6);
    \draw (v4) edge (v5);
    \node[vertice] (v4) at (1.25,-2.5) {$0$};
    \node[vertice] (v5) at (.75,-2.5) {$0$};
    \node[vertice] (v7) at (1,-2) {$0$};
    \draw (v4) edge (v7);
    \draw (v4) edge (v5);
    \draw (v5) edge (v7);
    \draw (v6) edge (v7);
    % --- top
    \node[vertice] (v4) at (1.25,1) {$1$};
    \node[vertice] (v5) at (.75,1) {$1$};
    \node[vertice] (v6) at (1,1.5) {$1$};
    \draw (v4) edge (v2);
    \draw (v5) edge (v2);
    \draw (v4) edge (v6);
    \draw (v5) edge (v6);
    \draw (v4) edge (v5);
    \node[vertice] (v4) at (1.25,2.5) {$0$};
    \node[vertice] (v5) at (.75,2.5) {$0$};
    \node[vertice] (v7) at (1,2) {$0$};
    \draw (v4) edge (v7);
    \draw (v4) edge (v5);
    \draw (v5) edge (v7);
    \draw (v6) edge (v7);
    % second strip
    \begin{scope}[shift={(1,0)}]
          \node[vertice] (vv2) at (1,.5) {$0$};
          \node[vertice] (vv3) at (1,-.5) {$0$};
          \draw (v2) edge (vv2);
          \draw (v3) edge (vv3);
          \draw (vv2) edge (vv3);
          % --- bottom
          \node[vertice] (v4) at (1.25,-1) {$1$};
          \node[vertice] (v5) at (.75,-1) {$1$};
          \node[vertice] (v6) at (1,-1.5) {$1$};
          \draw (v4) edge (vv3);
          \draw (v5) edge (vv3);
          \draw (v4) edge (v6);
          \draw (v5) edge (v6);
          \draw (v4) edge (v5);
          \node[vertice] (v4) at (1.25,-2.5) {$0$};
          \node[vertice] (v5) at (.75,-2.5) {$0$};
          \node[vertice] (v7) at (1,-2) {$0$};
          \draw (v4) edge (v7);
          \draw (v4) edge (v5);
          \draw (v5) edge (v7);
          \draw (v6) edge (v7);
          % --- top
          \node[vertice] (v4) at (1.25,1) {$1$};
          \node[vertice] (v5) at (.75,1) {$1$};
          \node[vertice] (v6) at (1,1.5) {$1$};
          \draw (v4) edge (vv2);
          \draw (v5) edge (vv2);
          \draw (v4) edge (v6);
          \draw (v5) edge (v6);
          \draw (v4) edge (v5);
          \node[vertice] (v4) at (1.25,2.5) {$0$};
          \node[vertice] (v5) at (.75,2.5) {$0$};
          \node[vertice] (v7) at (1,2) {$0$};
          \draw (v4) edge (v7);
          \draw (v4) edge (v5);
          \draw (v5) edge (v7);
          \draw (v6) edge (v7);
        \end{scope}
        % last strip
        \begin{scope}[shift={(3,0)}]
          \node[vertice] (v2) at (1,.5) {$0$};
          \node[vertice] (v3) at (1,-.5) {$0$};
          \draw (v2) edge[dotted] (vv2);
          \draw (v3) edge[dotted] (vv3);
          \draw (v2) edge (v3);
          % --- bottom
          \node[vertice] (v4) at (1.25,-1) {$1$};
          \node[vertice] (v5) at (.75,-1) {$1$};
          \node[vertice] (v6) at (1,-1.5) {$1$};
          \draw (v4) edge (v3);
          \draw (v5) edge (v3);
          \draw (v4) edge (v6);
          \draw (v5) edge (v6);
          \draw (v4) edge (v5);
          \node[vertice] (v4) at (1.25,-2.5) {$0$};
          \node[vertice] (v5) at (.75,-2.5) {$0$};
          \node[vertice] (v7) at (1,-2) {$0$};
          \draw (v4) edge (v7);
          \draw (v4) edge (v5);
          \draw (v5) edge (v7);
          \draw (v6) edge (v7);
          % --- top
          \node[vertice] (v4) at (1.25,1) {$1$};
          \node[vertice] (v5) at (.75,1) {$1$};
          \node[vertice] (v6) at (1,1.5) {$1$};
          \draw (v4) edge (v2);
          \draw (v5) edge (v2);
          \draw (v4) edge (v6);
          \draw (v5) edge (v6);
          \draw (v4) edge (v5);
          \node[vertice] (v4) at (1.25,2.5) {$0$};
          \node[vertice] (v5) at (.75,2.5) {$0$};
          \node[vertice] (v7) at (1,2) {$0$};
          \draw (v4) edge (v7);
          \draw (v4) edge (v5);
          \draw (v5) edge (v7);
          \draw (v6) edge (v7);
        \end{scope}
        \draw (3,1.75) node {$\cdots$};
        \draw (3,-1.75) node {$\cdots$};
        \draw[<->] (1,-3) -- node[midway,below] {$n$ times} (4,-3);
      \end{tikzpicture}
  \caption{Amplifier gadget from Lemma~\ref{lem:ampligadget}.}
  \label{fig:amplifiergadget}
\end{subfigure}
\caption{Key gadgets for proof of Theorem~\ref{theo:lowerbound}}
\end{figure}

\begin{lemma}[Sequencer gadget]\label{lem:seqgadget}
  For any $n$ and any sequence ${u\in\{0,1\}^n}$, there exists a graph ${G=(V,E)}$ of size ${O(n)}$ with one distinguished node ${v\in V}$ and an inital configuration ${x\in\{0,1\}^V}$ such that, for any graph $H=(V',E')$ containing $G$ as induced subgraph and ${y\in \{0,1\}^{V'}}$ an initial configuration for $H$, if
  \begin{itemize}
  \item ${y_{|V} = x}$ and
  \item node ${v}$ has at most one neighbor outside $V$ in $H$ and no other node of $V$ has a neighbor outside $V$,
  \end{itemize}
  then the orbit ${(y^t)_{t\geq 0}}$ of configurations of $H$ starting from $y=y^0$ verifies ${y^0_v = u_1}$, and ${y^t_v = u_t}$ for $1\leq t\leq n$, and $y^t_v = u_n$ for $t>n$.
\end{lemma}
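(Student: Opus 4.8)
The plan is to realise the prescribed word at $v$ by controlling, at every time-step, the number of neighbors of $v$ in state $1$, using the timer gadget of Lemma~\ref{lem:timergadget} (in its $0\to1$ form and its symmetric $1\to0$ counterpart) as the only moving parts. The crucial observation is that a timer node attached to $v$ has no other neighbor outside its own gadget, so by Lemma~\ref{lem:timergadget} its state is a fixed function of $t$, \emph{independent of the state of $v$ and of the external context of $H$}. Hence the quantity $p^t$, defined as the number of internal neighbors of $v$ in state $1$ at time $t$, is a predetermined function of $t$, and the whole problem reduces to designing a neighborhood in which $p^t$ sits just above $d_0/2$ exactly when $u_{t+1}=1$ and just below it when $u_{t+1}=0$, where $d_0$ is the internal degree of $v$.

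Concretely, I would attach to $v$ one pair of timer nodes per transition of $u$: for every $t\in\{1,\dots,n-1\}$ with $u_t\neq u_{t+1}$, connect $v$ to the position-$t$ pair of the $0\to1$ timer when $u_{t+1}=1$, and to the position-$t$ pair of the $1\to0$ timer when $u_{t+1}=0$ (this is exactly the wiring of Figure~\ref{fig:sequencergadget}). Since a $0\to1$ pair at position $t$ turns $0\to1$ precisely at time $t$ and a $1\to0$ pair turns $1\to0$ precisely at time $t$, each transition of $u$ contributes a single $\pm2$ jump to $p^t$ at the right instant, while $p^t$ stays constant between transitions. I would complete $v$'s neighborhood with a \emph{base} of constant-state nodes, each a triangle of equal states with one vertex linked to $v$; a short check shows such a triangle keeps its state forever whatever the (varying) state of $v$. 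Choosing the numbers of base $0$'s and $1$'s to solve the two elementary linear constraints ``$d_0$ even'' and ``$p^0=d_0/2\pm1$'' (the sign matching $u_1$) can be done with $\cO(n)$ base nodes, so the gadget has size $\cO(n)$.

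It then remains to show that $v$ follows $u$. Because each pair contributes its $\pm2$ exactly at firing time, an immediate induction on $t$ gives $p^t=d_0/2+1$ whenever $u_{t+1}=1$ and $p^t=d_0/2-1$ whenever $u_{t+1}=0$, for all $t\geq0$, where $u_{t+1}$ is read as $u_n$ once $t\geq n-1$ (no timer fires beyond position $n-1$). Feeding this into the majority rule at $v$, I would check the two cases separately: with no external neighbor the degree $d_0$ is even and $v$ counts its own state, whereas a single external neighbor makes the degree odd and removes the self-count. In both cases the margin $\pm1$ around $d_0/2$ is exactly wide enough to absorb one external vote together with the self-count, so $y^{t+1}_v=u_{t+1}$ regardless of the context; setting $x_v=u_1$ yields $y^0_v=u_1$, $y^t_v=u_t$ for $1\leq t\leq n$, and $y^t_v=u_n$ for $t>n$.

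I expect the main obstacle to be precisely this robustness bookkeeping: the gadget must force $v$ against an adversarial external neighbor whose very presence flips the parity of $v$'s degree, so the design hinges on keeping $p^t$ at distance \emph{exactly} $1$ from $d_0/2$. Verifying that this single margin simultaneously handles the self-loop of the even-degree case and the $\pm1$ of the external vote, and that consecutive transitions of $u$ never drive $p^t$ past these two target levels, is the delicate point; by contrast, the timer placement and the base-counting are routine once the decoupling ``$p^t=$ fixed function of $t$'' is established.
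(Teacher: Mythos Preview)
Your proposal is correct and follows essentially the same approach as the paper: two timer gadgets supply pairs of neighbors that flip at the transition times of $u$, and a small constant-state base pads the degree so that $p^t$ sits at $d_0/2\pm 1$ according to $u_{t+1}$. The only cosmetic differences are that the paper attaches a single triangle with \emph{two} of its vertices linked to $v$ (rather than one triangle per base node with a single attachment), and it disposes of the $u_1=1$ case by symmetry instead of allowing base $1$'s explicitly; your more detailed parity/robustness check for the even vs.\ odd degree of $v$ is exactly the content of the paper's one-line ``independently of the potential additional neighbor outside the gadget and the parity of the degree of $v$''.
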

\begin{proof}
  Suppose that ${u_1=0}$ (the other case is symmetric).
  Let $$1\leq s_1<s_2<\cdots<s_k<n$$ be the positions $j$ in $u$ such that ${u_{j+1}\neq u_j}$ and $u_j=0$.
  Symmetrically, let ${1\leq t_1<t_2<\cdots<t_l<n}$ be the positions $j$ in $u$ such that ${u_{j+1}\neq u_j}$ and $u_j=1$.
  Note that since we suppose ${u_1=0}$, it holds that ${1\leq s_1<t_1<s_2<t_2<\cdots}$ and either $k=l$ (if $u_n=0$) or $k=l+1$ (if $u_n=1$).
  We now describe the sequencer gadget associated to $u$. It is made of a first timer gadget that triggers $0\to 1$ state changes with distinguished nodes $v_i^j$ for ${j\in\{1,2\}}$ and ${1\leq i\leq n}$, and a second timer gadget that triggers $1\to 0$ state changes with distinguished nodes $w_i^j$ for ${j\in\{1,2\}}$ and ${1\leq i\leq n}$.
  Then, an additional node $v$ (the distinguished node of the sequencer gadget) is connected to timer gadgets as follows:
  \begin{itemize}
  \item for all ${1\leq i\leq k}$, $v$ is connected to both $v_{s_i}^1$ and $v_{s_i}^2$ ;
  \item for all ${1\leq i\leq l}$, $v$ is connected to both $w_{t_i}^1$ and $w_{t_i}^2$.
  \end{itemize}
  Moreover, if $k=l$, we add a clique of $3$ nodes in state $0$, both of which are connected to $v$.
  The initial configuration of the gadget is made by the initial configurations of each timer gadget (Lemma~\ref{lem:timergadget}) and all other nodes ($v$ and possibly the additional clique) in state $0$.
See Figure~\ref{fig:sequencergadget} for an example with $k=l$.
  
  Let us consider the behavior of this initialized gadget inside a larger graph $H$ as in the hypothesis of the lemma.
  Initially, $v$ is in state $0$, and it has $l+2$ neighbors in state $0$ and $l$ in state $1$ within the sequencer gadget, so even if $v$ has an additional neighbor outside the gadget, the majority at node $v$ is guaranteed to be $0$.
  By the behavior of timer gadgets, the neighboring configuration of node $v$ inside the gadget does not change until time $s_1$, when both $v_{s_1}^1$ and $v_{s_1}^2$ have turned into state $1$.
  So, at time $s_1$, node $v$ has $l$ neighbors in state $0$ and $l+2$ in state $1$ within the gadget.
  Therefore, independently of the potential additional neighbor outside the gadget and the parity of the degree of $v$, the local majority at $v$ is $1$ so $v$ turns into $1$ at step $s_1+1$.

  It is straightforward to prove by induction with the same analysis that the state of $v$ at step $j$ is $u_j$ for ${1\leq j\leq n}$ and $u_n$ for ${j>n}$ and $0=u_1$ at initial step.
\end{proof}

An amplifier gadget is subgraph with a special node, such that if at any time the special node has its two neighbors outside the subgraph in state 1, then the entire subgraph changes from an initial large majority of 0s to a steady large majority of 1s.

\begin{lemma}[Amplifier gadget]\label{lem:ampligadget}
  There is a constant ${1/2<\alpha\leq 1}$ and, for any large enough $n$, a graph ${G=(V,E)}$ of size ${\Omega(n)}$ with one distinguished node ${v\in V}$ and an inital configuration ${x\in\{0,1\}^V}$ such that for any graph $H=(V',E')$ containing $G$ as induced subgraph and ${y\in \{0,1\}^{V'}}$ an initial configuration for $H$, if
  \begin{itemize}
  \item ${y_{|V} = x}$ and
  \item $v$ has two neighbors outside $V$ in $H$ and no other node of $V$ has a neighbor outside $V$,
  \end{itemize}
  then the orbit ${(y^t)_{t\geq 0}}$ of configurations of $H$ starting from $y=y^0$ verifies:
  \begin{itemize}
  \item if node $v$ has its two neighbors outside $V$ in state $1$ at some time step, then the proportion of 0s among nodes of $V$ becomes at most ${1-\alpha}$ after time $O(n)$;
  \item otherwise, the proportion of 0s among nodes of $V$ stays at least $\alpha$ forever.
  \end{itemize}
\end{lemma}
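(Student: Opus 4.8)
The plan is to realise $G$ as a chain of $\Theta(n)$ identical \emph{cells}, each with a constant number of nodes, connected in series and entered through $v$ (see Figure~\ref{fig:amplifiergadget}). Each cell contains a pair of \emph{signal} nodes, mutually adjacent and initialised to $0$; each signal node is moreover attached to a triangle of nodes in state $1$ and to a small cluster of nodes in state $0$, and the two signal nodes of cell $i$ are joined to those of cells $i-1$ and $i+1$. As in the triangles of Lemma~\ref{lem:timergadget}, every node of a $1$-triangle or a $0$-cluster sits in a monochromatic triangle with at most one outside neighbour, hence keeps its state forever; I would establish this once and treat these nodes as fixed anchors. The node $v$ is attached to the two signal nodes of the first cell, to one extra stable $1$-anchor, and to its two external neighbours. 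Counting the $\Theta(1)$ anchors and signal nodes per cell, I would fix the initial configuration $x$ (signal nodes and $0$-clusters in state $0$, $1$-triangles in state $1$) and choose the cell contents so that the proportion of $0$s is some constant $\alpha>1/2$, while after flipping every signal node to $1$ the proportion of $0$s drops to at most $1-\alpha$.

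The proof then splits into the two regimes, each by induction on time. For \textbf{stability}, assuming $v$ never has both external neighbours equal to $1$ at the same step, I show every node keeps its initial value: the anchors are frozen; $v$, backed by a single $1$-anchor against its two $0$-valued signal neighbours, absorbs at most one external $1$ and stays $0$; and each signal node, whose cross-cell neighbours remain $0$ by the induction hypothesis, still sees a $0$-majority and stays $0$. For \textbf{propagation}, suppose both external neighbours of $v$ equal $1$ at some step $t_0$. A short computation gives $v=1$ at step $t_0+1$, which tips both signal nodes of the first cell so that they turn to $1$ together at step $t_0+2$; from then on the two signal nodes of a cell \emph{latch}, because each one sees its partner in state $1$ together with its $1$-triangle, a strict majority of $1$s that is insensitive to the states of $v$ and of the neighbouring cells. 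A latched cell then presents a persistent $1$ to the next cell, tipping it in turn, so the flip travels one cell per constant number of steps and reaches the last cell after $O(n)$ steps, after which all signal nodes are permanently $1$; the counting above yields proportion of $0$s at most $1-\alpha$.

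The crux, and the main obstacle, is the simultaneous calibration of the local majority conditions so that all of the following hold at once: a quiescent signal node is stable at $0$ against its anchors; a single incoming $1$, from $v$ or from a latched neighbour, flips it; and, crucially, once the cell's two signal nodes are $1$ they remain so \emph{permanently and independently of their environment}, so that the wave is irreversible even though the trigger at $v$ may last a single step and $v$ itself may subsequently oscillate. Making these inequalities hold uniformly requires careful bookkeeping of each node's degree parity and of the tie-breaking rule (a node of even degree counting its own state), exactly as in Lemma~\ref{lem:timergadget}; the self-latching pair backed by a fixed $1$-triangle is the device that turns a momentary, possibly oscillating trigger into a permanent majority flip.
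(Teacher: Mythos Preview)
Your proposal is correct and follows essentially the same approach as the paper: a chain of $\Theta(n)$ identical cells, each with a mutually adjacent pair of ``signal'' (central) nodes backed by frozen $1$-triangles, triggered through $v$ and latching irreversibly once flipped so that a wave propagates in $O(n)$ steps; extra frozen $0$-nodes pad the count to push the initial proportion of $0$s above $1/2$. The only cosmetic difference is that in the paper the $0$-clusters hang off the $1$-triangles rather than off the signal nodes directly (their role is purely to balance the proportion, not to influence the signal majority), which makes the calibration you flag as the crux a little cleaner; with your attachment you would indeed have to be careful that the extra $0$-neighbours do not prevent a single incoming $1$ from tipping the signal node.
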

\begin{proof}
  Let ${\alpha = 4/7(1-\epsilon)}$, for $\epsilon> 0$ small enough so that $\alpha>1/2$, and consider the graph of Figure~\ref{fig:amplifiergadget} made of $v$ connected to a triangle and $n$ copies of an identical strip of 14 nodes.
  Moreover, consider the initial configuration $x$ being the one appearing in the figure with $v$ in state $0$.
  Note that $x$ contains a proportion of nodes in state $0$ at least ${\frac{4}{7}\times\frac{14n}{14n+4}}$, \textit{i.e.} at least $\alpha$ if $n$ is supposed large enough.
  Note also that configuration $x$ is stable (the state of each node corresponds to the local majority seen at this node), except possibly for node $v$, depending of the state of its two neighbors outside the gadget.
  Precisely, if the two neighbors of $v$ outside the gadget are in state $1$ then the local majority seen at $v$ changes and $v$ turns into state $1$, otherwise $v$ stays unchanged.
  If at some time step node $v$ turns into $1$, then the two central nodes of the first strip on its right will turn into $1$ the next step.
  From that point on, they will never turn back to $0$ whatever the behavior at $v$ since the sole strip gives them a majority of neighbors in state $1$.
  Moreover the central nodes of the next strips will also progressively turn into state $1$ so that after $O(n)$ steps the gadget reaches a stable configuration identical to $x$ except that the pair of central nodes of each strip is in state $1$, as well as $v$ (which is forced to $1$ by the first strip).
  In this configuration, the proportion of nodes in state $1$ is at least $\alpha$ (for $n$ large enough as above).

  In summary, if the two neighbors of $v$ outside the gadget are both in state $1$ at some step, then, after a linear time, the proportion of $0$ in the gadget stabilizes to at most ${1-\alpha}$.
  Otherwise, the gadget maintains a proportion of state $0$ at least $\alpha$ forever.
\end{proof}

\begin{proof}[Proof of Theorem~\ref{theo:lowerbound}]
For any $n$, any input $a\in\{0,1\}^n$  for Alice and any input $b\in\{0,1\}^n$ for Bob, consider a sequencer gadget $G_A$ for sequence $a$ of size $n_A$, with input $x_A\in\{0,1\}^{V_A}$ and distinguished node $v_A$, and similarly a sequencer gadget $G_B$ for sequence $b$ (by Lemma~\ref{lem:seqgadget}). Since sequencer gadgets are $\cO(n)$, we can choose $m\in \Theta(n)$ such that $\alpha\cdot m > (1-\alpha)m + n_A+n_B$ where $\alpha$ is the constant from the amplifier gadget construction (Lemma~\ref{lem:ampligadget}).  Then consider an amplifier gadget $G$ of parameter $m$ with distinguished node $v$ and initial configuration $x$. Finally let $H$ be the graph made of the disjoint union of $G_A$, $G_B$ and $G$ and where node $v$ is connected to both $v_A$ and $v_B$. Consider the initial configuration $y$ which is equal to $x_A$, $x_B$ and $x$ on $V_A$, $V_B$ and $V$ respectively. $H$ is by construction of size $\Theta(n)$.

Observe that the connections of gadgets inside $H$ and the choice of initial configuration $y$ satisfy the hypothesis of Lemma~\ref{lem:seqgadget} and \ref{lem:ampligadget}. Therefore, by Lemma~\ref{lem:seqgadget}, the sequence of states taken by node $v_A$ will be exactly $a_1,a_2,\ldots,a_n,a_n,\ldots$ and similarly for node $v_B$ with sequence $b$. Then, by Lemma~\ref{lem:ampligadget}, the proportion of 0s inside $G$ will converge towards at most $(1-\alpha)$ if there is $i$ such that $a_i=b_i$ and towards at least $\alpha$ else. By choice of $m$, the steady limit majority in the entire graph $H$ will be for state $1$ in the first case, and for state $0$ in the second case. Moreover, there is convergence to a fixed point in linear time in both cases.

To conclude, observe that a locally checkable proof with certificate $o(n)$ applied to $H$ with the proper initial configuration on each gadget, and some $T\in O(n)$ would give a protocol with $o(n)$ communication to solve \textsf{DISJ} problem on input $(a,b)\in\{0,1\}^n\times\{0,1\}^n$ since $G_A$ is only connected to its complement in $H$ by one edge.

\end{proof}

\subsection{Lower bounds for bounded degree graphs}
\newcommand{\predt}{\textsc{LOC-PRED}}
\newcommand{\EQ}{\textsc{EQ}}
In this section, we study the lower bounds for $\electionpred$ on bounded degree graphs.

\newcommand{\Path}{\textsc{Cycle}}

In particular, we study the case in which $\mathcal{G}_2$ is the class of graphs with maximum degree at most $2.$ In other words, we focus in studying path graphs and cycle graphs. In this case, we show that $\electionpred$ admits proof-label schemes of size $\Omega(\log n).$ We accomplish this task by a reduction to the task of verifying if $G$ is a path or a cycle. More precisely, we define the problem $\Path = \{G \in \mathcal{G}_2: G \text{ is a path graph.} \}$

We recall the notation $P_n$ for a path and $C_n$ for a cycle  of $n$-nodes. 
\begin{proposition} \label{prop:att1}
     Let $n \geq 1,$ $G \in \mathcal{G}_2$ with $n$ nodes and $k \in [n].$ Let us consider the configuration $ x \in \{0,1\}^{n}$ for majority in $G$ given by $x_k = x_{k+1} = 1$ and $x_{k+j \mod{n}} = (1 + x_j) \mod{2}$ for $2 \leq j < n.$ For a node $i \in [n]$ we have that $i$ stabilizes in time $T=\mathcal{O}(n).$ Moreover, for all $i \in [n]$ $x^{t}_i = 1$ for all $t\geq n.$ In addition, node $k$ and $k+1$ do not change their state, i.e. $x^{t}_{k} =x^{t}_{k+1} = 1 $ for all $t\geq0$ and for any $i \in [n],$ such that $x_i=0,$ we have that $i$ changes at least time, more precisely, $x^{n}_i = 1$.
\end{proposition}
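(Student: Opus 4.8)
The plan is to show that $x$ is, apart from the two forced $1$s at $k,k+1$, a single alternating arc, and that this adjacent pair acts as a \emph{stable seed} from which a solid block of $1$s grows at unit speed in both directions, swallowing the alternating background until the whole graph is in state $1$. I would first record the three local rules for a degree‑$2$ node $u$ (a node of a cycle, or an interior node of a path): $u$ turns to $1$ if both neighbours are $1$, turns to $0$ if both are $0$, and \emph{retains} its state if its neighbours disagree (this last is exactly the tie case of the majority rule, handled by the self‑loop $a_{uu}=1$ of an even‑degree node). For the degree‑$1$ endpoints of a path, $u$ simply copies its unique neighbour.

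First I would prove the seed invariance, i.e.\ that $x^t_k=x^t_{k+1}=1$ for all $t\geq 0$, by simultaneous induction on $t$: if $x^t_k=x^t_{k+1}=1$, then at step $t+1$ node $k$ sees its neighbour $k+1$ in state $1$, so it never has two $0$‑neighbours and stays $1$; symmetrically for $k+1$. The base case is the definition of $x$, so the pair $\{k,k+1\}$ is frozen in state $1$ forever, which already gives the assertion about $k$ and $k+1$.

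The core is a front‑propagation invariant. By the definition of $x$, the complement of the seed is a single arc that alternates $0,1,0,\dots$ and is bordered by $0$ on both sides; depending on the parity of $n$ the seed is either the pair $\{k,k+1\}$ or the triple $\{k-1,k,k+1\}$, and in either case this arc has \emph{odd} length (at most $n-2$). I would prove by induction on $t$ that after $t$ steps the solid block of $1$s containing the seed has grown by exactly one node on each side, while its complement is still a $0$‑bordered alternating arc, now shorter by $2$. The inductive step uses precisely the three local rules: the two $0$s bordering the arc each have one $1$‑neighbour from the block and one $1$‑neighbour from the alternation, hence turn to $1$ and join the block; every interior node of the arc sees two opposite neighbours and flips, which preserves the alternation with a shifted phase; and the frontier $1$s of the block see a disagreeing pair and retain their value. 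Thus the $1$‑block strictly grows until the alternating arc has length $1$, at which point the remaining lone $0$ has two $1$‑neighbours and becomes $1$ at the next step, yielding the all‑$1$ configuration, which is a fixed point.

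Counting steps then finishes the proof: the arc has odd length at most $n-2$ and loses two nodes per step, so all‑$1$ is reached after about $n/2$ steps on a cycle; on a path the only difference is that a front may have to traverse the whole length and the far endpoint needs one extra step to copy its neighbour, so every node is in state $1$ by time $n$ and stays there, giving $x^t_i=1$ for all $t\geq n$ and $T=\mathcal{O}(n)$. Any node $i$ with $x_i=0$ must therefore flip at least once, and in particular $x^n_i=1$. The step I expect to be most delicate is making the invariant airtight at the moment the two fronts meet on the far side of the cycle, together with the two parities of $n$ (seed of size $2$ versus $3$) and the path endpoints: the uniform fact that the complementary arc stays $0$‑bordered and of odd length is exactly what forbids a stable $0$‑block and forces the meeting to collapse cleanly to all $1$s.
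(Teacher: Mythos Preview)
Your proposal is correct and takes a genuinely different route from the paper. The paper argues by a short induction on $n$: it checks $n=4$ directly, then assumes the statement for $n$ and claims that on the $(n{+}1)$-node graph the first $n$ nodes are all in state $1$ by time $n$, so node $n{+}1$ (with two $1$-neighbours) flips at step $n{+}1$. This is extremely terse and leaves implicit why the dynamics on the first $n$ positions of the larger graph coincide with those on the $n$-node graph; in effect the paper is using, without saying so, exactly the one-sided front-propagation picture you make explicit.

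Your approach instead tracks the dynamics directly: you isolate the tie/copy rules for degree-$2$ and degree-$1$ nodes, prove that the adjacent pair $\{k,k+1\}$ is frozen, and then maintain the invariant that the complement of the growing $1$-block is a $0$-bordered alternating arc of odd length that shrinks by two each step. This gives a sharper convergence time (about $n/2$ on cycles), handles both parities of $n$ and both the path and cycle cases uniformly, and makes the ``meeting of fronts'' step visible rather than hidden in an inductive hypothesis. The paper's induction is shorter to state but less informative; your argument is the one that actually explains why the induction works. One minor point to tidy when you write it out in full: on a path with $k=1$ and $n$ even the far end of the arc is a $1$, not a $0$, so the ``$0$-bordered'' description only holds verbatim on cycles; your separate treatment of the degree-$1$ endpoint already covers this, but it is worth saying explicitly.
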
 

\begin{proof}
    Without loss of generality, let us assume $k=1.$ For $n=4$ we have that $x=1101.$ By the definition of majority rule, it is clear that $F(x) = 1111$ and $x^2(x) = 1111.$ Thus, $F^{t}(x) = 1111$ for $t\geq 4.$ Let us assume that this is true for $n.$ We claim that the proposition holds for $n+1.$ In fact, we have by induction hypothesis that $x^{n}(x)_i = 1 $ for each $i \in [n],$ $F^{n}(x)_n = 1$ and $x^{n}_{n+1}=0$ By the majority rule, since $n+1$ has two neighbors in state $1$ we deduce that $x^{n+1}_{n+1}=1.$ The proposition holds.  
    
    %and $F^{2n+1}_{2n+2}=0.$ Repeating the same argument for $t= 2n+2$ and for node $2n+2,$ we obtain    $F^{2n+2}_{2n+2}=1.$ The proposition holds.  
    \end{proof}

\begin{proposition} \label{prop:att2}
     Let $n \geq 1$ and let as consider the configuration $y \in \{0,1\}^{2n}$ in $C_{2n}$ given by $
    y_1 \in \{0,1\}$ and $y_{j+1 \mod{n}} = (1 + y_j) \mod{2}$ for $1 \leq j < 2n.$ For any node $i \in [2n]$ we have that $i$ stabilizes in time $T=0.$ Moreover, for all $i \in [n]$ $y^{2t}_i = y_i$ and $y^{2t+1} = 1+y_i \mod{2} $ for all $t\geq 0.$  
\end{proposition}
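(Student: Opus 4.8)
The plan is to verify directly that the configuration $y$ is precisely the alternating two-colouring of the even cycle $C_{2n}$, and that every such configuration sits on a limit cycle of period $2$. First I would unwind the defining recursion $y_{j+1}=(1+y_j)\bmod 2$: starting from $y_1$ it produces the pattern $y_1,\,1+y_1,\,y_1,\,1+y_1,\dots$, so each node carries the value opposite to both of its cyclic neighbours, and because the cycle has even length $2n$ this pattern closes up consistently at the wrap-around with no clash. The single structural fact I need is therefore: for every node $i\in[2n]$, both of its neighbours in $C_{2n}$ are in state $1+y_i\bmod 2$.

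Next I would apply one step of the majority rule to an arbitrary node $i$. Every node of $C_{2n}$ has degree $2$, which is even, so the threshold is $d(i)/2=1$ and ties are resolved by keeping the current state. Using the structural fact, the sum of the neighbour states of $i$ is $2$ when $y_i=0$ and $0$ when $y_i=1$; in neither case is the sum equal to $1$, so no tie occurs, and in both cases the rule forces $y^1_i=1+y_i\bmod 2$. Hence $y^1$ is exactly the complemented alternating configuration $\overline{y}$.

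Finally I would observe that $\overline{y}$ is again an alternating two-colouring, in which each node still disagrees with both of its neighbours, so the identical computation applied to $y^1$ yields $y^2_i=1+y^1_i\bmod 2=y_i$ for every $i$. Thus the orbit is the $2$-cycle $y,\overline{y},y,\overline{y},\dots$; in particular $y$ reappears at step $2$, so its transient length is $T=0$ and $y$ is an attractor. A trivial induction on $t$ then gives $y^{2t}_i=y_i$ and $y^{2t+1}_i=1+y_i\bmod 2$ for all $t\ge 0$, as claimed.

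I expect no genuine obstacle here, as the statement is a routine verification. The only point demanding care is the tie-breaking convention for even-degree nodes: since degree $2$ is even, one must explicitly confirm that a tie never actually arises, which holds because the two neighbours of any node always agree with each other while disagreeing with the node itself. Once this is checked, the period-$2$ behaviour and the parity formulas follow immediately.
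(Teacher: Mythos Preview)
Your proof is correct and follows essentially the same approach as the paper's: both hinge on the single observation that in the alternating configuration on $C_{2n}$ every node has both neighbours in the complementary state, so the majority rule forces a global flip at each step. Your write-up is considerably more detailed---in particular, you explicitly verify that no tie ever occurs at these degree-$2$ nodes---whereas the paper dispatches the whole thing in one sentence, but the underlying argument is identical.
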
 

\begin{proof}
    The proposition holds since each node in $i$ $C_{2n}$ satisfies $x_{i} = 1+x_{i},$ thus, each node has always two neighbors in the complementary state.
\end{proof}

\begin{proposition} \label{prop:att3}
     Let $n \geq 1.$ If $n$ is odd, the majority dynamics on $P_{n}$ and $C_{n}$ exhibit only fixed points. If $n$ is even then, the configuration $y$ in Proposition \ref{prop:att2} is the only attractor that is not a fixed point. Moreover, $y$ is not reachable.
\end{proposition}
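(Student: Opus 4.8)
The plan is to reduce everything to an understanding of the period-$2$ orbits. By Proposition~\ref{prop:energy} every attractor of the majority dynamics is a fixed point or a limit cycle of period $2$, so it suffices to describe the period-$2$ orbits on $P_n$ and $C_n$, and then—for the even case—to analyze the reachability of the alternating configuration $y$ of Proposition~\ref{prop:att2}. The engine of the whole argument will be the following structural lemma, which I would prove first: \emph{in any period-$2$ orbit $x\leftrightarrow x'$ with $x\neq x'$, the set $S=\{v: x_v\neq x'_v\}$ of oscillating nodes is closed under adjacency.}

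To establish the lemma I would use the explicit tie-breaking rule (the self-loop $a_{vv}=1$ for even degree). For a degree-$2$ node $v$, a direct check of the rule shows that $v$ changes state between consecutive steps if and only if both its neighbors carry the opposite state; hence if $v\in S$ then both neighbors equal $\bar{x}_v$ at time $0$, and since $v$ must change back at the next step both neighbors must equal $x_v$ at time $1$, so both neighbors flipped and lie in $S$. For a degree-$1$ endpoint $v$ of a path the rule makes $v$ copy its unique neighbor $a$, so $v\in S$ forces $x_a=\bar{x}_v$ and $x'_a=x_v=\bar{x}_a$, whence $a\in S$; conversely an oscillating neighbor drags the endpoint into $S$. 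Since $P_n$ and $C_n$ are connected, $S\neq\emptyset$ forces $S=V$, and the condition ``every node differs from both of its neighbors'' says exactly that $x$ is a proper $2$-coloring, i.e. an alternating configuration.

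From the lemma the first two claims follow by a bipartiteness count. An odd cycle admits no proper $2$-coloring, so $C_n$ with $n$ odd has no period-$2$ orbit and, by Proposition~\ref{prop:energy}, only fixed points; for an even cycle the same constraint pins the unique period-$2$ orbit to the two alternating configurations, which form the single orbit of Proposition~\ref{prop:att2}. The endpoint case handles $P_n$ identically. The delicate point here—and the main obstacle—is that a path is always bipartite, so an alternating configuration is always a genuine period-$2$ orbit; the odd-$n$ ``only fixed points'' conclusion must therefore be read through reachability rather than mere existence, which is exactly why the last claim is the real content.

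To show that $y$ is not reachable I would prove that its only preimage under the majority rule is its orbit partner $\bar y$. Writing $y_i=1$ for $i$ odd and $0$ for $i$ even, suppose a configuration $z$ has image $y$ and that some even node, say node $2$, has $z_2=0$; the updates at the odd neighbors $1$ and $3$ (each of which must become $1$) then force, through the tie rule, $z_1=z_3=1$, but then node $2$ sees two neighbors in state $1$ and updates to $1\neq y_2$, a contradiction. Hence every even node of $z$ is $1$ and, symmetrically, every odd node is $0$, so $z=\bar y$; thus $y$ is entered only from within its own $2$-cycle. The principal technical burden throughout is the meticulous case analysis of the tie-breaking convention—the self-loop at even-degree nodes and the copy behavior at degree-$1$ endpoints—carried out in both half-steps of each period-$2$ transition and in the preimage computation.
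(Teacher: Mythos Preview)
Your proof is correct and, for the characterization of the period-$2$ orbits, takes the same route as the paper: the paper's entire proof is the single sentence ``if a node $v$ has a neighbor in the same state, by the majority rule, it cannot change its state,'' which is exactly your structural lemma in compressed form (a monochromatic edge freezes, so a genuine $2$-cycle must be a proper $2$-coloring). You go beyond the paper in two respects. First, you actually prove the unreachability claim via the preimage computation $z\to y\Rightarrow z=\bar y$; the paper's one-line proof does not address this at all. Second, you correctly flag the awkwardness with $P_n$: the alternating configuration on a path is a genuine period-$2$ orbit for every $n$, so the ``only fixed points'' clause for odd $n$ can only be salvaged through a reachability reading---a subtlety the paper's statement and proof simply gloss over.
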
 

\begin{proof}
    The proposition follows from the fact that if a node $v$ has a neighbor in the same state, by the majority rule, it cannot change its state.
\end{proof}

\begin{lemma}
    Let us suppose that $\electionpred$ restricted to $\mathcal{G}_2$ admits a locally checkable proof with certificates of size $L$. Then, $\Path$ admits a locally checkable proof with certificates of size $2L.$
\end{lemma}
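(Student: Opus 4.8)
The plan is to convert a hypothetical locally checkable proof $(\mathbf P,\mathbf V)$ for \electionpred\ on $\mathcal{G}_2$ with certificates of size $L$ into one for \Path. Since the input graph $G$ is connected and has maximum degree $2$, it is either a path or a cycle, and I want the combined certificate of each node to consist of the certificates of \emph{two} \electionpred-instances built from $G$ — this is what yields the bound $2L$, the remaining per-node bookkeeping (its own degree and its role in the configuration) being $\cO(1)$ and therefore exchangeable during the verification rounds in the locally-checkable-proof model. The $\Path$-verifier will simulate $\mathbf V$ on both instances and accept only if both runs accept and a few local consistency checks pass.

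The separating instance is built from the seeded configuration $x$ of Proposition~\ref{prop:att1}: two adjacent $1$'s followed by the alternating pattern, with the seed placed at a vertex the prover declares to be an endpoint. By Proposition~\ref{prop:att1} this configuration always sweeps to the all-$1$ fixed point $\mathbf 1$, but the \emph{time} it needs is the lever. On a path with the seed at a genuine degree-$1$ endpoint, the domain wall between the block of $1$'s and the alternating tail advances one vertex per step and is absorbed only at the opposite endpoint, so the orbit reaches $\mathbf 1$ after $\Theta(n)$ steps and its penultimate configuration $z$ equals $\mathbf 1$ with a single $0$ at the far endpoint. I would therefore certify the two statements ``$x^{T}=\mathbf 1$'' and ``$x^{T-1}=z$'' (i.e. the \electionpred-instances $(G,(x,\mathbf 1,T))$ and $(G,(x,z,T-1))$) for the horizon $T$ corresponding to the full sweep, and have the verifier additionally check locally that the two declared endpoints — the seed vertex and the unique $0$ of $z$ — have degree $1$. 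For a genuine path all of this holds, the honest prover can supply valid \electionpred-certificates, and every node accepts, giving completeness.

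Soundness is where the real work lies. When $G$ is a cycle there is no degree-$1$ vertex, so the local endpoint checks already create tension; but the substantive point — and the reason the reduction cannot collapse to a trivial degree test — is the timing argument for the interior. On a cycle the only configurations reaching $\mathbf 1$ are those whose alternating regions are consumed from \emph{both} sides simultaneously, so by the attractor analysis of Propositions~\ref{prop:att2} and~\ref{prop:att3} the transient towards $\mathbf 1$ is at most about $n/2$; consequently no initial configuration can satisfy ``$x^{T}=\mathbf 1$'' together with ``$x^{T-1}=z$'' for a horizon $T$ forced to exceed $n/2$, since by that time the orbit has long reached $\mathbf 1$ and $x^{T-1}\neq z$. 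Hence, for any certificate assignment, the soundness of $\mathbf V$ forces a rejection at some node. The main obstacle I anticipate is precisely pinning down this step: I must (i) make the horizon $T$ provably large relative to $n$ rather than a value the prover can shrink — encoding the length of the sweep into the reachability instance so that a cycle cannot fake a short transient followed by the single-lagging pattern — and (ii) rule out \emph{every} cheating initial configuration on a cycle, not only the honest seeded one, which is exactly where Proposition~\ref{prop:att3} (the parity dichotomy and the non-reachability of the oscillating attractor) is needed to certify that no orbit on $C_n$ exhibits the ``single lagging endpoint'' signature at a large time.
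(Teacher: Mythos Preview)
Your approach diverges from the paper's, and the two obstacles you flag at the end are not mere technicalities: as stated, the scheme is unsound on cycles and the proposed remedies (timing plus Proposition~\ref{prop:att3}) cannot close the gap.

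Concretely, on a cycle $C_n$ a dishonest prover simply declares no seed vertex, takes $x=\mathbf 1$, $z=\mathbf 1$, and $T=1$. Both \electionpred-instances $(G,(x,\mathbf 1,T))$ and $(G,(x,z,T-1))$ are then \emph{true}, so the simulated verifier $\mathbf V$ accepts everywhere; no vertex has $z_v=0$ and no vertex is a declared seed, so neither endpoint check is triggered. Your obstacle~(i), ``force $T$ large relative to $n$'', is the wrong lever: in the reduction the \emph{prover} supplies $x$, $z$, and $T$, and without a local constraint on $x$ there is nothing tying $T$ to the dynamics at all. Your obstacle~(ii) appeals to Proposition~\ref{prop:att3}, but that proposition concerns which attractors exist and are reachable; it says nothing that excludes the trivial cheat $x=\mathbf 1$.

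The missing ingredient is a local structural check on the \emph{input configuration} itself: require every vertex to verify that $x$ is alternating at it unless it (or its same-state neighbour) has degree~$1$. Once you add this, a cycle with odd $n$ is caught by parity, and a cycle with even $n$ is forced into the purely alternating configuration, which by Proposition~\ref{prop:att2} is period~$2$ and never reaches $\mathbf 1$ --- so a single \electionpred-instance with target $\mathbf 1$ already suffices and your second instance and the whole timing analysis become superfluous. This is essentially the mechanism the paper exploits: it makes the initial configuration \emph{determined by local structure} (via the alternating bit plus the degree of the node) rather than freely chosen, so that on a cycle the resulting instance is provably outside \electionpred. The paper additionally doubles the graph so the cycle case is always even, which is how the factor~$2$ in $2L$ arises there, but the conceptual point --- alternating on an even cycle is a period-$2$ attractor, not a transient-length comparison --- is what you are missing.
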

\begin{proof}
 Let $G$ be an instance of $\Path$ of size $n.$ Let us fix $T=2n.$ Let $\pi_{P} = (\mathcal{M}_p,\mathcal{D}_p)$ be a PLS for $\electionpred(P_{2n},x, \Vec{1},T)$ of size $L$.   
 We define the marking $\mathcal{M}'(v) = (M_p(v),M_c(v),y_v),$ where $y$ is defined as in Proposition \ref{prop:att2}. We describe the decoding algorithm $\mathcal{D'}$ for each node $v$:

 \begin{enumerate}
     \item \textbf{If} $\delta(v) = 1$ or $\delta(v) = 2$, node $v$ creates neighbors $v_1,v_2.$ 
    %  \item \textbf{If} $\delta(v) = 1$  node $v$ creates neighbor $v'.$ 
     \item \textbf{Else} node $v$ \textit{rejects.}
     \item \textbf{Verification:}
     \item Node $v$ verifies that $y_w = 1+ y_v \mod{2}$ for each $w \in N(v)$. Otherwise \textit{rejects.}
     \item \textbf{If} $\delta(v) = 1$, node $v$ assigns $y_{v_2} = y_{v_1}=1$ and \textit{accept}.
     \item \textbf{If} $\delta(v) = 2$, node $v$ assigns $y_{v_2} = 1+ y_{v_1} \mod{2}$ and \textit{rejects}.
    %  If $v$ accepts, then $x_w = y_w$ for each $w \in N[v].$
    % \item \textbf{Else} node $v$ verifies that $y_{w} = 1+ y_v \mod{2}$ for each neighbor $w \in N(v).$ Then, node $v$ assigns state $y_{v'} = 1+ y_v \mod{2}$ and $x_{v}=x_{v'}.$
     \item Let $G'$ be the new graph obtained in the previous steps. 
     Observe that $G'$ is whether $P_{2n}$ or $C_{2n}.$ 
     In addition, we have that each node $v$ can test all the possible labeling for its neighbors. 
     This latter observation together with the fact that each node has the certificates for $P_{2n}$ imply that each node $v$ in $G$ can run $\pi_P$  for instance $\electionpred(G',x, \Vec{1},T)$. 
     Then, node $v$ accepts if and only if it accepts on $\pi_P.$
     \end{enumerate}
We claim that $\pi'$ is a PLS for $\Path.$ 

\emph{Completenesss.} Observe that $\Path(G) = 1$ if and only if there is at least one node $v$ such that $\delta(v)=1.$ On one hand, by definition, the configuration $x$ is the one in Proposition \ref{prop:att1}. Then,  the attractor will be $1,$ and thus $\pi'$ accepts. 

\emph{Soundness.} On the other hand, if  $\Path(G)=0$ then, for all labeling $\mathcal{L},$ by Proposition \ref{prop:att2} $\pi_P$ must reject by Proposition \ref{prop:att3}. We conclude that $\pi'$ is a PLS for $\Path$ of size $2L+1.$
\end{proof}

In \cite{goos2016locally} it is shown that every locally checkable proof for $\Path$ has certificates of size $\Omega(\log n)$. We obtain the following theorem.

\begin{theorem}
    Every locally checkable proof for $\electionpred$ on \(n\)-node graphs of maximum degree \(2\) has certificates on $\Omega(\log n)$ bits.
\end{theorem}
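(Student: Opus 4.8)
The plan is to obtain this statement as an immediate consequence of the reduction lemma established just above, combined with the external lower bound of~\cite{goos2016locally} for distinguishing a path from a cycle. The whole argument is a contrapositive composition: a too-efficient certification of $\electionpred$ on $\mathcal{G}_2$ would transfer, through the reduction, into a too-efficient certification of $\Path$, which is impossible.

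Concretely, I would argue by contradiction. Suppose there were a locally checkable proof for $\electionpred$ restricted to $\mathcal{G}_2$ with certificates of size $L = o(\log n)$. By the preceding reduction lemma, $\Path$ would then admit a locally checkable proof whose certificates have size $\mathcal{O}(L)$ (the marking $\mathcal{M}'$ merely appends to each $\electionpred$-certificate the constant-sized bit $y_v$, so the blow-up is the stated factor $2$ up to an additive constant). Since $\mathcal{O}(L) = o(\log n)$ as well, this would contradict the result of~\cite{goos2016locally} asserting that every locally checkable proof for $\Path$ requires certificates of size $\Omega(\log n)$. Hence no such scheme with $L = o(\log n)$ can exist, and the certificates of any locally checkable proof for $\electionpred$ on graphs of maximum degree $2$ must have size $\Omega(\log n)$.

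The only points requiring care are the asymptotic bookkeeping and the preservation of the certification model. First, multiplying the certificate length by the fixed constant $2$ (plus a constant number of extra bits) does not change the $\Omega(\log n)$ order of magnitude, so the lower bound survives the reduction intact. Second, one must check that the object produced by the reduction genuinely lives in the regime to which the $\Omega(\log n)$ bound applies: it is a \emph{locally checkable} proof (nodes inspect their neighborhoods rather than only exchanging certificates), it stays within bounded-degree graphs, and its verification radius remains constant. Because the bound of~\cite{goos2016locally} holds for locally checkable proofs with an arbitrary number of verification rounds, it applies verbatim here.

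Since the reduction lemma has already carried out the substantive work—encoding the cycle-versus-path distinction into the attractor behaviour of majority dynamics on $\mathcal{G}_2$ via Propositions~\ref{prop:att1}, \ref{prop:att2} and~\ref{prop:att3}—there is no genuine obstacle remaining, and the theorem follows as a clean corollary. The one subtlety worth stating explicitly is that the constant-factor overhead of the reduction is what makes the transfer of an $\Omega(\log n)$ \emph{lower} bound legitimate; had the overhead been super-constant, the argument would only yield a weaker bound.
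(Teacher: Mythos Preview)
Your proposal is correct and follows essentially the same approach as the paper: a contradiction argument that composes the preceding reduction lemma with the $\Omega(\log n)$ lower bound for $\Path$ from~\cite{goos2016locally}. The paper's own proof is a terse three-line version of exactly this contrapositive; your additional remarks about the constant-factor overhead and the preservation of the locally-checkable-proof model are sound elaborations but not new ingredients.
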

\begin{proof}
    Assume that there exists a locally checkable proof $\pi$ for  $\electionpred$ of size $o(\log n ).$ Thus, by the previous lemma, there must be a locally checkable proof $\pi'$ for $\Path$ of size $o(\log n).$ This is a contradiction with the fact that the problem $\Path$ admits a locally checkable proof with size $\Omega(\log n)$. The theorem holds.
\end{proof}

%
% ---- Bibliography ----
%
% BibTeX users should specify bibliography style 'splncs04'.
% References will then be sorted and formatted in the correct style.
%
\bibliographystyle{splncs04} %  ESTILO DE LNCS
\bibliography{biblio}   % Bibliografia

\begin{thebibliography}{10}
\providecommand{\url}[1]{\texttt{#1}}
\providecommand{\urlprefix}{URL }
\providecommand{\doi}[1]{https://doi.org/#1}

\bibitem{asuncion2010turning}
Asuncion, A.U., Goodrich, M.T.: Turning privacy leaks into floods:
  surreptitious discovery of social network friendships and other sensitive
  binary attribute vectors. In: Proceedings of the 9th annual ACM workshop on
  Privacy in the electronic society. pp. 21--30 (2010)

\bibitem{BalliuDFO18}
Balliu, A., D'Angelo, G., Fraigniaud, P., Olivetti, D.: What can be verified
  locally? J. Comput. Syst. Sci.  \textbf{97},  106--120 (2018)

\bibitem{BickKO22}
Bick, A., Kol, G., Oshman, R.: Distributed zero-knowledge proofs over networks.
  In: 33rd ACM-SIAM Symposium on Discrete Algorithms (SODA). pp. 2426--2458
  (2022)

\bibitem{BousquetFP21}
Bousquet, N., Feuilloley, L., Pierron, T.: Local certification of graph
  decompositions and applications to minor-free classes. In: 25th International
  Conference on Principles of Distributed Systems (OPODIS). LIPIcs, vol.~217,
  pp. 22:1--22:17. Schloss Dagstuhl - Leibniz-Zentrum f{\"{u}}r Informatik
  (2021)

\bibitem{bredereck2017manipulating}
Bredereck, R., Elkind, E.: Manipulating opinion diffusion in social networks.
  In: IJCAI International Joint Conference on Artificial Intelligence.
  International Joint Conferences on Artificial Intelligence (2017)

\bibitem{castellano2009statistical}
Castellano, C., Fortunato, S., Loreto, V.: Statistical physics of social
  dynamics. Reviews of modern physics  \textbf{81}(2), ~591 (2009)

\bibitem{concha2022complexity}
Concha-Vega, P., Goles, E., Montealegre, P., R{\'\i}os-Wilson, M.: On the
  complexity of stable and biased majority. Mathematics  \textbf{10}(18), ~3408
  (2022)

\bibitem{CrescenziFP19}
Crescenzi, P., Fraigniaud, P., Paz, A.: Trade-offs in distributed interactive
  proofs. In: 33rd International Symposium on Distributed Computing (DISC).
  LIPIcs, vol.~146, pp. 13:1--13:17. Schloss Dagstuhl - Leibniz-Zentrum
  f{\"{u}}r Informatik (2019)

\bibitem{deffuant2000mixing}
Deffuant, G., Neau, D., Amblard, F., Weisbuch, G.: Mixing beliefs among
  interacting agents. Advances in Complex Systems  \textbf{3}(01n04),  87--98
  (2000)

\bibitem{EsperetL22}
Esperet, L., L{\'{e}}v{\^{e}}que, B.: Local certification of graphs on
  surfaces. Theor. Comput. Sci.  \textbf{909},  68--75 (2022)

\bibitem{bousquet2021local}
Feuilloley, L., Bousquet, N., Pierron, T.: What can be certified compactly?
  compact local certification of mso properties in tree-like graphs. In:
  Proceedings of the 2022 ACM Symposium on Principles of Distributed Computing.
  pp. 131--140 (2022)

\bibitem{FeuilloleyFH21}
Feuilloley, L., Fraigniaud, P., Hirvonen, J.: A hierarchy of local decision.
  Theor. Comput. Sci.  \textbf{856},  51--67 (2021)

\bibitem{FeuilloleyFHPP21}
Feuilloley, L., Fraigniaud, P., Hirvonen, J., Paz, A., Perry, M.: Redundancy in
  distributed proofs. Distributed Computing  \textbf{34}(2),  113--132 (2021)

\bibitem{feuilloley2020compact}
Feuilloley, L., Fraigniaud, P., Montealegre, P., Rapaport, I., R{\'e}mila,
  {\'E}., Todinca, I.: Compact distributed certification of planar graphs.
  Algorithmica pp. 1--30 (2021)

\bibitem{FeuilloleyH18}
Feuilloley, L., Hirvonen, J.: Local verification of global proofs. In: 32nd
  International Symposium on Distributed Computing. LIPIcs, vol.~121, pp.
  25:1--25:17. Schloss Dagstuhl - Leibniz-Zentrum f{\"{u}}r Informatik (2018)

\bibitem{FraigniaudGNP21}
Fraigniaud, P., Gall, F.L., Nishimura, H., Paz, A.: Distributed quantum proofs
  for replicated data. In: 12th Innovations in Theoretical Computer Science
  Conference (ITCS). LIPIcs, vol.~185, pp. 28:1--28:20. Schloss Dagstuhl -
  Leibniz-Zentrum f{\"{u}}r Informatik (2021)

\bibitem{FraigniaudMRT22}
Fraigniaud, P., Montealegre, P., Rapaport, I., Todinca, I.: A meta-theorem for
  distributed certification. In: 29th International Colloquium on Structural
  Information and Communication Complexity (SIROCCO). LNCS, vol. 13298, pp.
  116--134. Springer (2022)

\bibitem{fraigniaud2019randomized}
Fraigniaud, P., Patt-Shamir, B., Perry, M.: Randomized proof-labeling schemes.
  Distributed Computing  \textbf{32}(3),  217--234 (2019)

\bibitem{ginosar2000majority}
Ginosar, Y., Holzman, R.: The majority action on infinite graphs: strings and
  puppets. Discrete Mathematics  \textbf{215}(1-3),  59--71 (2000)

\bibitem{goles2014computational}
Goles, E., Montealegre, P.: Computational complexity of threshold automata
  networks under different updating schemes. Theoretical Computer Science
  \textbf{559},  3--19 (2014)

\bibitem{goles2015complexity}
Goles, E., Montealegre, P.: The complexity of the majority rule on planar
  graphs. Advances in Applied Mathematics  \textbf{64},  111--123 (2015)

\bibitem{goles2020complexity}
Goles, E., Montealegre, P.: The complexity of the asynchronous prediction of
  the majority automata. Information and Computation  \textbf{274},  104537
  (2020)

\bibitem{goles2018complexity}
Goles, E., Montealegre, P., Perrot, K., Theyssier, G.: On the complexity of
  two-dimensional signed majority cellular automata. Journal of Computer and
  System Sciences  \textbf{91},  1--32 (2018)

\bibitem{goles2016pspace}
Goles, E., Montealegre, P., Salo, V., T{\"o}rm{\"a}, I.: Pspace-completeness of
  majority automata networks. Theoretical Computer Science  \textbf{609},
  118--128 (2016)

\bibitem{goles1985decreasing}
Goles-Chacc, E., Fogelman-Souli{\'e}, F., Pellegrin, D.: Decreasing energy
  functions as a tool for studying threshold networks. Discrete Applied
  Mathematics  \textbf{12}(3),  261--277 (1985)

\bibitem{goos2016locally}
G{\"o}{\"o}s, M., Suomela, J.: Locally checkable proofs in distributed
  computing. Theory of Computing  \textbf{12}(1),  1--33 (2016)

\bibitem{hegselmann2002opinion}
Hegselmann, R., Krause, U.: Opinion dynamics and bounded confidence models,
  analysis and simulation. Journal of Artificial Societies and Social
  Simulation  \textbf{5}(3) (2002)

\bibitem{heider1946attitudes}
Heider, F.: Attitudes and cognitive organization. The Journal of psychology
  \textbf{21}(1),  107--112 (1946)

\bibitem{javarone2014network}
Javarone, M.A.: Network strategies in election campaigns. Journal of
  Statistical Mechanics: Theory and Experiment  \textbf{2014}(8),  P08013
  (2014)

\bibitem{javarone2014social}
Javarone, M.A.: Social influences in opinion dynamics: the role of conformity.
  Physica A: Statistical Mechanics and its Applications  \textbf{414},  19--30
  (2014)

\bibitem{kelman1958compliance}
Kelman, H.C.: Compliance, identification, and internalization three processes
  of attitude change. Journal of conflict resolution  \textbf{2}(1),  51--60
  (1958)

\bibitem{kol2018interactive}
Kol, G., Oshman, R., Saxena, R.R.: Interactive distributed proofs. In: ACM
  Symposium on Principles of Distributed Computing. pp. 255--264. ACM (2018)

\bibitem{KormanKP10}
Korman, A., Kutten, S., Peleg, D.: Proof labeling schemes. Distributed Comput.
  \textbf{22}(4),  215--233 (2010)

\bibitem{kushilevitz1997communication}
Kushilevitz, E.: Communication complexity. In: Advances in Computers, vol.~44.
  Elsevier (1997)

\bibitem{mislove2007measurement}
Mislove, A., Marcon, M., Gummadi, K.P., Druschel, P., Bhattacharjee, B.:
  Measurement and analysis of online social networks. In: Proceedings of the
  7th ACM SIGCOMM conference on Internet measurement. pp. 29--42 (2007)

\bibitem{PhysRevE.68.046106}
Mobilia, M., Redner, S.: Majority versus minority dynamics: Phase transition in
  an interacting two-state spin system. Phys. Rev. E  \textbf{68},  046106 (Oct
  2003)

\bibitem{moore1997majority}
Moore, C.: Majority-vote cellular automata, ising dynamics, and p-completeness.
  Journal of Statistical Physics  \textbf{88},  795--805 (1997)

\bibitem{moussaid2013social}
Moussa{\"\i}d, M., K{\"a}mmer, J.E., Analytis, P.P., Neth, H.: Social influence
  and the collective dynamics of opinion formation. PloS one  \textbf{8}(11),
  e78433 (2013)

\bibitem{NaorPY20}
Naor, M., Parter, M., Yogev, E.: The power of distributed verifiers in
  interactive proofs. In: 31st ACM-SIAM Symposium on Discrete Algorithms
  (SODA). pp. 1096--115. {SIAM} (2020)

\bibitem{nguyen2020dynamics}
Nguyen, V.X., Xiao, G., Xu, X.J., Wu, Q., Xia, C.Y.: Dynamics of opinion
  formation under majority rules on complex social networks. Scientific reports
   \textbf{10}(1), ~456 (2020)

\bibitem{vieira2016phase}
Vieira, A.R., Crokidakis, N.: Phase transitions in the majority-vote model with
  two types of noises. Physica A: Statistical Mechanics and its Applications
  \textbf{450},  30--36 (2016)

\bibitem{yildiz2013binary}
Yildiz, E., Ozdaglar, A., Acemoglu, D., Saberi, A., Scaglione, A.: Binary
  opinion dynamics with stubborn agents. ACM Transactions on Economics and
  Computation (TEAC)  \textbf{1}(4),  1--30 (2013)

\end{thebibliography}

\end{document}